\newcommand*{\J}{\jmath}%
\DeclareSymbolFont{myletters}{OML}{ztmcm}{m}{it}
\DeclareMathSymbol{\uplambda}{\mathord}{myletters}{"15}
\newtheorem{proposition}{Proposition}
\newtheorem{theorem}{Theorem}
\begin{document}

\title{Impact of Subcarrier Allocation and User Mobility on the Uplink Performance of Multi-User Massive MIMO-OFDM Systems}

\author{Abhinav~Anand and 
        Chandra R.~Murthy,~\IEEEmembership{Senior Member,~IEEE}
\thanks{The authors are with the Department
of Electrical Communication Engineering, Indian Institute of Science, Bangalore, 560012, India.  E-mails: \{abhinavanand, cmurthy\}@iisc.ac.in.}
}

\maketitle

\begin{abstract}

This paper considers the uplink performance of a multi-user massive multiple-input multiple-output orthogonal frequency-division multiplexing (MIMO-OFDM) system with mobile users. Mobility brings two major problems to a MIMO-OFDM system: \emph{inter carrier interference} (ICI) and \emph{channel aging}. In practice, it is common to allot multiple contiguous subcarriers to a user as well as schedule multiple users on each subcarrier. Motivated by this, we consider a general subcarrier allocation scheme and derive expressions for the ICI power, uplink signal to interference plus noise ratio and the achievable uplink sum-rate, taking into account the ICI and the multi-user interference due to channel aging. We show that the system incurs a near-constant ICI power that depends linearly on the ratio of the number of users per subcarrier to the number of subcarriers per user, nearly independently of how the UEs distribute their power across the subcarriers. Further, we exploit the coherence bandwidth of the channel to reduce the length of the pilot sequences required for uplink channel estimation. We consider both zero-forcing and maximal-ratio combining at the receiver and compare the respective sum-rate performances. In either case, the subcarrier allocation scheme considered in this paper leads to significantly higher sum-rates compared to previous work, owing to the near-constant ICI property as well as the reduced pilot overhead.

\end{abstract}

\begin{IEEEkeywords}
Massive Multiple-Input-Multiple-Output orthogonal frequency-division multiplexing (MIMO-OFDM), inter carrier interference, channel aging, coherence bandwidth, uplink sum-rate.
\end{IEEEkeywords}

\IEEEpeerreviewmaketitle

\section{Introduction}\label{sec:introduction}

\IEEEPARstart{M}{assive} multiple-input multiple-output (MIMO) has evolved as a key technology for 5G and beyond, offering a substantial increase in the spectral and energy efficiency of cellular systems \cite{bjrnson_magazine_2016, marzetta_twc_2010, hoydis_jsac_2013}. Having a large number of antennas at the access point (AP) (i.e., base station (BS)) effectively combats fading, as the effective channel gain becomes nearly constant due to the phenomenon of channel hardening~\cite{marzetta_book_2016}. This enables each AP to serve tens to hundreds of users using the same time-frequency resource via spatial multiplexing. On the other side, the combination of MIMO and orthogonal frequency-division multiplexing (OFDM) has shown to provide high data rates and increased system flexibility \cite{xiao_tvt_2015, ng_tcom_2012, xu_tcom_2013}, and has been deployed in standards such as 3GPP LTE Advanced, IEEE 802.16 WiMax and 5G New Radio.

Much of the recent literature elucidating the performance advantages offered by single-carrier massive MIMO systems assumes the availability of perfect channel state information (CSI) at the AP or considers the impact of channel estimation errors and pilot contamination when the channels remain static over time. This inherently implies that users in the cell remain fixed at their locations or move at sufficiently low velocities. However, as user mobility increases, there arises a difference between the estimated channel at the AP and the actual channel experienced by the data symbols, a phenomenon popularly known as \emph{channel aging} \cite{akp_tvt_2017, akp_twc_2015, chopra_twc_2018, truong_jcn_2013}. This mismatch grows with time and results in a significant loss in the achievable sum-rate~ \cite{truong_jcn_2013, chopra_letters_2016}. In \cite{kong_tcom_2015}, the authors analyse the impact of channel aging and prediction on the uplink of a single-carrier massive MIMO system with MRC and ZF receivers. The design and comparison of various channel predictors for time-varying massive MIMO channels is provided in \cite{kim_tcom_2021}. The performance loss can be overcome to some extent using data-aided channel tracking in single-carrier systems~\cite{Chopra_TSP_2021}. The effect of channel aging in cell-free massive MIMO systems was analyzed in~\cite{Chopra_ComLet_2021}. 

With OFDM, mobility brings an additional impairment to the cellular system. The frequency offset resulting from the Doppler shift disrupts the orthogonality between subcarriers, resulting in \textit{inter carrier interference} (ICI) between them (see \cite{zhang_tcom_2017} and the references therein). Not only does ICI contribute an additive interference term to the received data signal, it also causes additional channel estimation error, further impacting data detection performance. Although OFDM and massive MIMO have been the dominant technology for wireless access in the past decade, surprisingly, the system performance where the two are simultaneously employed has not been studied much in the existing literature, especially in the context of channel aging and ICI. A key paper in this area is the previous work by Zhang et al. \cite{zhang_tcom_2017}, where the degradation in the sum-rate performance of a MIMO-OFDMA system was analyzed. However, that work focused on the simple case where each user is assigned only one subcarrier. In practical systems, each user is typically scheduled on multiple subcarriers in order to improve the per-user throughput, and multiple users are scheduled on each subcarrier. The latter is particularly important in massive MIMO systems, as multiple antennas at the AP can be used to suppress interference and provide array gains, and allow one to exploit the multiplexing gain offered by massive MIMO systems. The analysis in \cite{zhang_tcom_2017} does not extend to these scenarios. In this paper, we analyze the effect of channel aging and ICI in the more general scenario alluded to above. In the process, we also develop a new, low-overhead channel estimation scheme and provide novel insights into the system performance.

\subsection{Our Contributions}
In this paper, we analyze the uplink performance of a multi-user massive MIMO-OFDM (MU-mMIMO-OFDM) cellular system when mobile users transmit data on multiple contiguous subcarriers. The main contributions of the paper are as follows:
\begin{itemize}[leftmargin=*]
    \item We introduce a static subcarrier allocation scheme that generalizes the allocation adopted in the previous work \cite{zhang_tcom_2017}, allowing a UE to transmit on multiple subcarriers while also allowing a subcarrier to serve multiple UEs. We note that a static allocation of subcarriers is reasonable in massive MIMO systems due to  the channel hardening effect~\cite{marzetta_book_2016}. It is also appropriate in OFDM based systems due to the near-constant ICI property, which is an important observation in this work. These two factors imply that the effective SINR is uniform across different subcarriers, and therefore the number of subcarriers allotted to a given user is more important than which specific subcarriers are allotted.
    \item We derive an expression for the ICI power and examine its properties. For example, we show that when the number of subcarriers is large, the ICI power is nearly independent of the (transmit) power allocation employed by the users across the subcarriers allotted to them.
    \item Inspired by techniques in standards such as IEEE 802.11a and LTE, we present a pilot sequence transmission scheme that exploits the frequency-domain channel coherence to reduce the amount of training overhead involved in channel estimation. As we will see, this dramatically improves the  achievable rate of OFDM systems, especially under fast varying channels.
    \item We consider zero-forcing and maximal-ratio combining at the AP and derive expressions for the sum-rate performance. In particular, we exploit the near-constant ICI power alluded to above to derive simple yet accurate expressions for the achievable uplink SINR and the sum-rate. In turn, this allows us to obtain key insights on the sum-rate performance such as the number of subcarriers to be allotted to users, the effect of channel aging and the pilot overhead.
    \item We provide extensive numerical simulation results to validate the analytical expressions and provide further insights into the system performance.
\end{itemize}

Our results elucidate the impact of user mobility in massive MIMO-OFDM systems. In particular, we show that the number of subcarriers allotted per user needs to be judiciously chosen based on the coherence bandwidth, coherence time, number of subcarriers available, number of users to be served, and the number of antennas. By doing so, the multi-user interference and ICI can be better controlled, leading to significant gains in the achievable sum-rate. Furthermore, despite the ICI introduced by the user mobility, the benefits of massive MIMO can still be extracted, especially in low to medium mobility scenarios.

The organization of the paper is as follows: Section \ref{sec:systemmodel} presents the system model. Section \ref{sec:channelestimation} discusses channel estimation,  presents a pilot sequence allocation scheme, and analyzes the minimum pilot length required to ensure no pilot contamination. In section~\ref{sec:ici}, an expression for the ICI power is derived and analyzed. Section \ref{sec:sumrateperformance} presents the sum-rate performance of the system under two different receive combining schemes, namely, zero-forcing and maximal-ratio combining. Section \ref{sec:numericalresults} presents numerical results, and Section~\ref{sec:conclusion} concludes the paper. 

\textbf{Notation}: Boldface uppercase and lowercase letters represent matrices and vectors, respectively. The $(i, j)$-th element of the matrix $\mathbf{A}$ is denoted by $a_{ij}$. The notations $\left(\cdot\right)^{H}$ and $\left(\cdot\right)^{\dagger}$ represent the conjugate transpose and the pseudo-inverse operation, respectively, and $\mathbb{E}\left[\cdot\right]$ denotes the expectation operator. The notation $\mathcal{CN}\left(0, \sigma^{2}\right)$ denotes the circularly symmetric complex Gaussian distribution with mean zero and variance $\sigma^{2}$. To avoid any conflict arising with use of indices, we use $\J$ to represent $\sqrt{-1}$.

\section{System Model}\label{sec:systemmodel}
We consider the uplink of a single cell MU-mMIMO-OFDM system. An access point (AP) with $N_{\mathcal{B}}$ antennas (indexed as $m \in \{1, \dots, N_{\mathcal{B}}\}$) is located at the centre of the cell. There are a total of $N_{\mathcal{R}}$ single antenna user equipments (UEs) moving in random directions within the cell. The system deploys $N_{\mathcal{G}}$ subcarriers (indexed as $i \in \{1, \dots, N_{\mathcal{G}}\}$) spaced $\Delta f$ Hz apart and spanning a total bandwidth $B$ Hz. Each subcarrier is assigned to a group consisting of $N_\mathcal{U}$ UEs, where $N_\mathcal{U} \leq N_\mathcal{R}$, since one cannot schedule more than the total number of UEs on a given subcarrier. Contrariwise, each UE transmits its data on a subset of $N_{\mathcal{C}}$ contiguous subcarriers. Thus, a UE may be served by multiple subcarriers and a subcarrier may be shared among multiple UEs. However, each group of subcarriers serves a distinct group of UEs. This is ensured by creating equal number of groups on both sides, i.e.,  
\begin{equation}\label{eq:system-model}
    \frac{N_\mathcal{R}}{N_\mathcal{U}} = \frac{N_\mathcal{G}}{N_\mathcal{C}} = L, \quad L \in \mathbb{Z}^{+}  
\end{equation}
where $L$ denotes the number of UE and subcarrier groups (see figure \ref{fig:allocation-map}). As a consequence, for a given total number of UEs ($N_{\mathcal{R}}$) and total number of subcarriers ($N_{\mathcal{G}}$), the quantities $N_{\mathcal{U}}$ and $N_{\mathcal{C}}$ can be varied keeping their ratio $\frac{N_\mathcal{U}}{N_\mathcal{C}}$ constant. We assume that the system operates in the massive MIMO regime, so that $N_\mathcal{U} \ll N_\mathcal{B}$, although the total number of UEs in the cell $N_\mathcal{R}$ could be comparable to or even larger than $N_\mathcal{B}$. Table \ref{tab:notations} lists the key notations used in this paper.

\begin{table}[t]
\centering
\caption{\label{tab:notations}Notation}
\begin{tabular}{| l | c |} \hline
Total no. of AP antennas & $N_{\mathcal{B}}$ \\ \hline
Total no. of UEs & $N_{\mathcal{R}}$ \\ \hline 
Total no. of subcarriers & $N_{\mathcal{G}}$ \\ \hline 
No. of UEs allotted to each subcarrier & $N_{\mathcal{U}}$ \\ \hline 
No. of subcarriers allotted to each UE & $N_{\mathcal{C}}$ \\ \hline
No. of subcarriers in one coherence bandwidth & $N_{\mathcal{H}}$ \\ \hline 
Length of the pilot sequence & $N_{\mathcal{P}}$ \\ \hline 
No. of subcarriers on which each UE transmits pilots & $N_{\mathcal{V}}$ \\ \hline
No. of data symbols per frame & $N_{\mathcal{D}}$ \\ \hline
\end{tabular}
\end{table}

\begin{figure}
    \centering
    \includegraphics[scale=0.5]{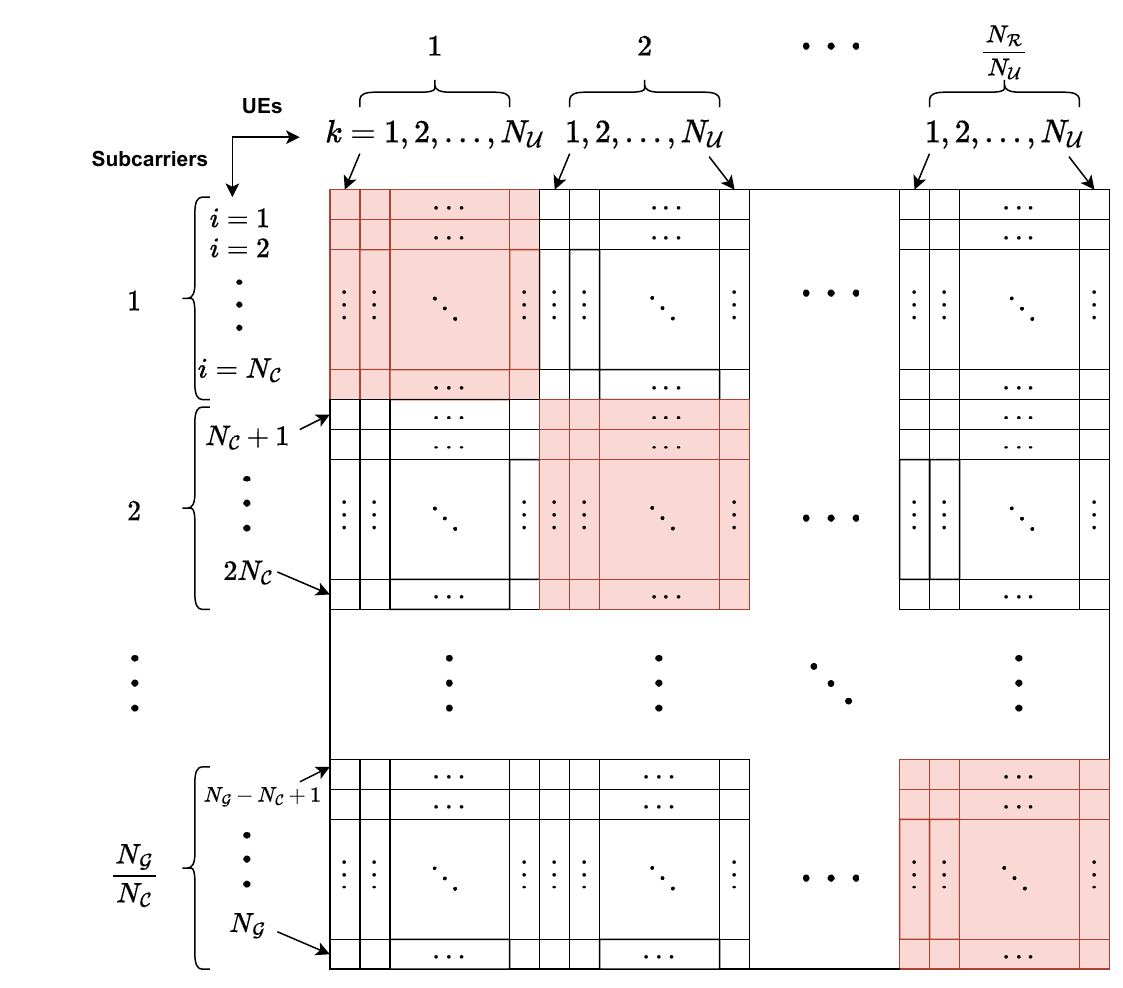}
    \caption{The UE-subcarrier allocation considered in this work.}
    \label{fig:allocation-map}
\end{figure}

The $k$-th UE served by the $i$-th subcarrier is denoted UE$_{ik}$ where $k \in \{1, \dots, N_\mathcal{U}\}$. The physical parameters concerning UE$_{ik}$ include $r_{ik}$ which represents its distance from the AP, $v_{ik}$ which denotes its instantaneous speed and $\phi_{ik}$ which models the angle between the line-of-sight (LoS) vector and the user's velocity vector. As in previous work, we assume that $\{v_{ik}\}$ and $\{\phi_{ik}\}$ are independent and identically distributed random variables drawn from a uniform distribution.\footnote{Note that $r_{ik}$, $v_{ik}$ and $\phi_{ik}$ could be the same for multiple values of $i$, if a UE is assigned multiple subcarriers. Thus, $v_{ik}$ and $\phi_{ik}$ are i.i.d. across different $(i,k)$ pairs only when they represent the speed and angle of \emph{different} UEs.} Specifically, $v_{ik}$ is uniformly distributed over $\left[0, V_{\max}\right]$ and $\phi_{ik}$ is uniformly distributed over $[0, 2 \pi]$, where $V_{\textrm{max}}$ denotes the maximum speed of all the UEs in the system\cite{zhang_tcom_2017}. 

Each UE divides its total transmit power among the subcarriers allotted to it during data transmission. If $P_{\textrm{T}, ik}$ denotes the \emph{total} transmit power of UE$_{ik}$, the RF signal transmitted by UE$_{ik}$ during the $n$-th signaling interval on subcarrier $f_{i}$ is expressed as\footnote{We alert the reader that $P_{\textrm{T}, ik}$ is \emph{not} the transmit power of a UE on subcarrier $i$. Also, as before,  $P_{\textrm{T}, ik}$ could be the same for multiple values of $i$ if a UE is assigned multiple subcarriers.}

\begin{equation}
    x_{ik}(t) = \sqrt{\eta_{ik} P_{\textrm{T},ik}} \, x_{ik}[n] \, e^{\J 2 \pi f_{i} t}.
\end{equation}
Here, $\eta_{ik}$ denotes the power control coefficient of the UE's $i$-th subcarrier, constrained as $0 \leq \eta_{ik} \leq 1$ and $\sum\limits_{i=lN_{\mathcal{C}} + 1}^{(l+1)N_{\mathcal{C}}} \eta_{ik} = 1$ for all $l \in \{0, 1, \dots, L-1\}$ and all $k \in \{1, 2, \dots, N_{\mathcal{U}}\}$. The quantity $x_{ik}[n]$ denotes the data symbol transmitted by UE$_{ik}$ during the $n$-th transmission; it is assumed to satisfy $\mathbb{E}[|x_{ik}[n]|^2] = 1$. Also, for ease of analysis, we assume that both the transmitter and the receiver employ rectangular pulse-shaping at their ends.

The average signal power received at each antenna of the AP from UE$_{ik}$ is  $cr^{-\beta}_{ik}P_{\textrm{T},ik}$, where $cr^{-\beta}_{ik}$ represents the large-scale path loss with $c$ being the path loss at a reference distance and $\beta$ being the path loss exponent \cite{zhang_tcom_2017}. To simplify the analysis, we consider path-loss-inversion-based uplink power control at the UEs \cite{dai_icassp_2006, chopra_twc_2018} so that $cr_{ik}^{-\beta} P_{\textrm{T},ik} = P_{\textrm{T}} $ for  all $i \in \{1, \dots, N_{\mathcal{G}}\}$ and $k \in \{1, \dots, N_{\mathcal{U}}\}$. The quantity $P_{\textrm{T}}$ is referred to as the effective transmit power of a UE. 

At the AP, the received signal is processed using standard OFDM operations involving the removal of cyclic prefix and the application of the fast Fourier transform (FFT). The Doppler shift induced by user mobility leads to frequency offsets between the received signal frequency and the local oscillator frequency at the AP. These frequency offsets are different for different users, and leads to ICI at the AP. Following the footsteps of \cite{zhang_tcom_2017}, with some algebra, the signal received by the $m$-th AP antenna on subcarrier $f_{i}$ for the $n$-th transmission can be expressed in frequency domain as 
\begin{equation}\label{eq:rim}
    r_{im}[n] = \sum_{k=1}^{N_{\mathcal{U}}} \sqrt{\eta_{ik} P_{\textrm{T}}}\,x_{ik}[n]\,h_{imk}[n] + u_{im}[n] + n_{im}[n].
\end{equation}
Here, $h_{imk}$ denotes the Rayleigh fast-fading channel coefficient between UE$_{ik}$ and the $m$-th AP antenna that can be modeled as a zero mean complex Gaussian random variable with variance $\sigma_{\textrm{h}}^{2}$. The term $u_{im}$ accounts for the total ICI received on subcarrier $f_{i}$ from all other subcarriers owing to the loss of orthogonality due to the Doppler-induced frequency offset. Note that $u_{im}$ equals zero when all UEs in the system are stationary. Finally, $n_{im} \sim \mathcal{CN}\left(0, \sigma_{\textrm{n}}^{2}\right)$ represents the complex AWGN at the $m$-th AP antenna.

Collectively, the signals received across all $N_{\mathcal{B}}$ antennas of the AP can be expressed in matrix-vector form as
\begin{equation}\label{eq:ri}
    \mathbf{r}_{i}[n] = \sqrt{P_{\textrm{T}}} \,\mathbf{H}_{i}[n] \, \mathbf{D}_{\eta_{i}}^{1/2} \, \mathbf{x}_{i}[n] + \mathbf{u}_{i}[n] + \mathbf{n}_{i}[n]
\end{equation}
where $\mathbf{H}_{i} \in  \mathbb{C}^{N_{\mathcal{B}} \times N_{\mathcal{U}}}$ represents the uplink channel between UEs and the AP on the $i$-th subcarrier. The vectors $\mathbf{r}_{i} = \left[{r_{i1}[n], \dots, r_{iN_{\mathcal {B}}}[n]}\right]^{T}$ and $\mathbf{x}_{i}[n] = \left[{x_{i1}[n], \dots, x_{iN_{\mathcal{U}}}[n]}\right]^{T}$ contain the frequency domain symbols received across all $N_{\mathcal{B}}$ antennas and transmitted by all $N_{\mathcal{U}}$ UEs on the $i$-th subcarrier, respectively. The diagonal matrix $\mathbf{D}_{\eta_{i}} \triangleq \textrm{diag}\{\eta_{i1}, \dots, \eta_{iN_{\mathcal{U}}}\}$ contains the power control coefficients of all the UEs within the $i$-th subcarrier. The terms $\mathbf{u}_{i}[n] = \left[u_{i1}[n], \dots, u_{iN_{\mathcal{B}}}[n]\right ]^{T}$ and $\mathbf{n}_{i}[n] = \left[n_{i1}[n], \dots, n_{iN_{\mathcal{B}}}[n]\right]^{T}$ represent the ICI and the AWGN at the AP, respectively.

\subsection{Time Varying Channel Model}
As mentioned earlier, in addition to the Doppler shift, user mobility results in a channel that varies continuously with time. These temporal variations, in turn, result in a mismatch between the channel that the AP estimates during uplink training and the channel through which the subsequent data symbols propagate. To model this disparity, let $h_{imk}^{\mathcal{P}}$ denote the fading channel coefficient for the pilots of the $i$-th subcarrier between the $k$-th UE and the $m$-th AP antenna. To simplify analysis, we assume that the channel stays constant during the pilot transmission phase \cite{akp_tvt_2017,akp_twc_2015,kong_tcom_2015, zhang_tcom_2017}. The channel coefficient experienced by the subsequent data symbol is denoted by $h_{imk}^{\mathcal{D}}[n]$, where $n = 1, \dots, N_{\mathcal{D}}$ denotes the symbol transmission index. The superscripts $\mathcal{P}$ and $\mathcal{D}$ are used to convey that the channel experienced by the pilot symbols is used as a reference for the channel experienced by the data symbols. Both the channel coefficients are zero mean random processes that are assumed to be related as follows~\cite{chopra_twc_2018}:
\begin{equation}\label{eq:aging_model}
    h_{imk}^{\mathcal{D}}\left[n\right] = \rho_{ik}\left [n\right]h_{imk}^{\mathcal{P}} + g_{imk}^{\mathcal {D}}\left[n\right].
\end{equation}
In the above expression, $\rho_{ik}\left[n\right] = J_{0}\left({2\pi f_{\textrm {D}}^{\left({ik}\right)}nT_{\textrm{s}}}\right)$ represents the Jakes' discrete-time normalized autocorrelation coefficient \cite{jakes_wiley_1974} between $h_{imk}^{\mathcal{P}}$ and $h_{imk}^{\mathcal{D}}\left[n\right]$, with $J_{0}(.)$ representing the zeroth order Bessel function of the first kind, $T_{\textrm{s}}$ representing the OFDM symbol duration,  $f_{\textrm {D}}^{\left({ik}\right)}=\frac {v_{ik}}{\textrm{c}}f_{\textrm{c}}$ denoting the maximum Doppler spread of UE$_{ik}$ with $f_{\textrm{c}}$ as the carrier center frequency and $\textrm{c}$ the speed of light. The term $g_{imk}^{\mathcal{D}}\left[n\right]$ in \eqref{eq:aging_model} represents the channel variation due to aging and is uncorrelated with $h_{imk}^{\mathcal{P}}$; it is a zero mean Gaussian random variable with variance $\sigma^{2}_{\textrm{h}}\left({1-\rho_{ik}^{2}\left [n\right]}\right)$ \cite{vu_jsac_2007}. Note that, in \eqref{eq:aging_model}, the statistics of the innovation component are chosen to ensure that the channel correlation coefficients match with those of the Jakes' model. Such a model has been used in~ \cite{chopra_twc_2018,akp_tvt_2017,zhang_tcom_2017,vu_jsac_2007, zheng_twc_2021}.

Now, we can rewrite \eqref{eq:aging_model} in matrix form as

\begin{equation}\label{eq:ar2}
    \mathbf{H}_{i}^{\mathcal{D}}\left[{n}\right] = \mathbf {H}_{i}^{\mathcal{P}} \boldsymbol{\Lambda}_{i}\left [n\right] + \mathbf{G}_{i}^{\mathcal{D}}\left [n\right],
\end{equation}
where $\mathbf{H}_{i}^{\mathcal{D}}\left[{n}\right], \mathbf {H}_{i}^{\mathcal{P}} \in \mathbb{C}^{N_{\mathcal{B}} \times N_{\mathcal{U}}}$ represent the  channel corresponding to the data and  pilot symbols, respectively. The diagonal matrix $\boldsymbol{\Lambda}_{i}\left[n\right] = \textrm{diag}\{\rho_{i1}\left[n\right], \ldots , \rho_{iN_{\mathcal{U}}}\left[n\right]\}$ contains the Jakes' correlation coefficients, and $\mathbf {G}_{i}^{\mathcal{D}}\left[n\right] \in \mathbb{C}^{N_{\mathcal{B}} \times N_{\mathcal{U}}}$ captures the channel variation due to aging. We note that the correlation coefficients in $\boldsymbol{\Lambda}_{i}\left[n\right]$ are functions of the UE velocities which are random, so the correlation coefficients and the matrix $\boldsymbol{\Lambda}_{i}\left[n\right]$ are also random.

\section{Pilot Length Reduction and Channel Estimation}\label{sec:channelestimation}
\subsection{Exploiting the Coherence Bandwidth to Reduce the Pilot Length}

In order to detect the data, the AP needs to estimate the channel from all the $N_{\mathcal{U}}$ UEs assigned to each subcarrier. In general, if the channel could vary arbitrarily across subcarriers, this entails the use of orthogonal pilot sequences by the UEs on each subcarrier allotted to them, in order to avoid pilot contamination. This, in turn, implies that $N_{\mathcal{P}} \geq N_{\mathcal{U}}$, where $N_{\mathcal{P}}$ denotes the length of the pilot sequence. Therefore, increasing $N_{\mathcal{U}}$ is necessarily accompanied by an increase in the training overhead. In this work, we exploit the coherence bandwidth, i.e., the fact that channels corresponding to  subcarriers within a coherence bandwidth are approximately equal, to reduce the training overhead needed for channel estimation without pilot contamination. Although the idea of transmitting pilots on a subset of subcarriers has been used in standards such as IEEE 802.11a and LTE, data transmission in these standards are as per OFDMA, i.e., only one user transmits pilots on a given subcarrier. In our work, we account for a MU-MIMO scenario. Specifically, we analyze the minimum pilot length required to estimate channels without incurring pilot contamination, when the number of subcarriers alloted to a UE, the number of UEs served by a subcarrier, and the number of subcarriers in a coherence bandwidth are given.

The coherence bandwidth ($B_{\textrm{c}}$) is defined as the frequency interval over which the channel seen by multiple contiguous subcarriers is approximately the same \cite{marzetta_book_2016, bjrnson_book_2017}.\footnote{The coherence bandwidth depends on the delay spread of the channel, which typically varies slowly over time. For the purposes of this work, we  consider the maximum delay spread across all users, and use it to define the coherence bandwidth. We also note that there are other definitions for the coherence bandwidth, e.g., the bandwidth over which the channel correlation coefficient remains above a threshold, say $0.7$. Since we assume that the channel can be well approximated as remaining \emph{constant} within the coherence bandwidth, we consider a more conservative threshold for the correlation coefficient, e.g.,~$0.95$.} Thus, if multiple subcarriers fit inside a coherence bandwidth, there is no need to estimate the channel on every subcarrier allotted to a UE. Instead, each UE only needs to estimate the channels on subcarriers allotted to it that lie in different coherence bandwidth intervals. This observation can be used to reduce the number of UEs transmitting pilots per subcarrier, thereby reducing the minimum pilot length required, as described in the following paragraph.

For example, if the coherence bandwidth spans $N_{\mathcal{H}} = 4$ subcarriers, and each user is allotted $N_{\mathcal{C}} = 4$ contiguous subcarriers, and each subcarrier serves $N_{\mathcal{U}} \le 4$ UEs, then it is sufficient for each user to transmit a single pilot symbol on one of the $4$ subcarriers allotted to it, with each UE transmitting its pilot on a distinct subcarrier. Next, if each subcarrier is allotted to $>4$ but $\le 8$ UEs, then two users will transmit two pilot symbols (in consecutive OFDM symbols) on one of the $4$ subcarriers allotted to it, and a distinct subset of UEs transmit pilots on each subcarrier. Further, the pilot sequence (of length $2$) allotted to each user is orthogonal to the pilot sequence allotted to the other user transmitting pilots on the same subcarrier. On the other hand, if each user is allotted, say, $8$ contiguous subcarriers and each subcarrier is allotted to $\le 4$ UEs, then it is sufficient for each UE to transmit a single pilot symbol on two of the $8$ subcarriers allotted to it (on two subcarriers that lie in different coherence bandwidth intervals), while the other UEs who are allotted the same set of $8$ subcarriers transmit their pilots on other subcarriers within the same coherence bandwidth to avoid pilot contamination. We generalize these examples in the following proposition.

\begin{proposition} 
Let $N_{\mathcal{H}}$ denote the number of contiguous subcarriers in one coherence bandwidth. Also suppose each UE is allotted $N_{\mathcal{C}}$ subcarriers while each subcarrier serves $N_{\mathcal{U}}$ UEs. Then, the  length of the pilot sequence required for estimating the channels at the AP without pilot contamination is at most 
\begin{equation}\label{eq:npfinal}
    N_{\mathcal{P}} = \left \lceil{\frac{N_{\mathcal{U}}}{\textrm{min}\left(N_{\mathcal{C}}, N_{\mathcal{H}}\right)}} \right \rceil,
\end{equation}
and it is sufficient for each UE to transmit pilots on ${N_{\mathcal{V}} \triangleq \lceil{N_{\mathcal{C}}/N_{\mathcal{H}}}\rceil}$ distinct subcarriers.
\end{proposition}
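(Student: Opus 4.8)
The plan is to exhibit an explicit pilot-assignment construction and then verify that it lets the AP estimate all the required channels without pilot contamination using a pilot sequence of length $N_{\mathcal{P}} = \lceil N_{\mathcal{U}}/\min(N_{\mathcal{C}},N_{\mathcal{H}})\rceil$. I would first reduce to a single group of subcarriers, say the $l$-th group consisting of the $N_{\mathcal{C}}$ contiguous subcarriers $\{lN_{\mathcal{C}}+1,\ldots,(l+1)N_{\mathcal{C}}\}$ together with the $N_{\mathcal{U}}$ UEs it serves: since distinct groups occupy disjoint sets of subcarriers (Section~\ref{sec:systemmodel}), two UEs from different groups never share a subcarrier and hence cannot contaminate one another, so it suffices to work within one group. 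Next I would partition these $N_{\mathcal{C}}$ contiguous subcarriers into $N_{\mathcal{V}} = \lceil N_{\mathcal{C}}/N_{\mathcal{H}}\rceil$ consecutive sub-blocks, each contained in one coherence bandwidth and hence each holding $\min(N_{\mathcal{C}},N_{\mathcal{H}})$ subcarriers (the last one possibly fewer, which only helps); by the definition of $B_{\textrm{c}}$, the pilot-phase channel coefficient $h_{imk}^{\mathcal{P}}$ of any UE is essentially constant across all subcarriers inside a given sub-block.

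Within each sub-block I would then round-robin the $N_{\mathcal{U}}$ UEs over its $\min(N_{\mathcal{C}},N_{\mathcal{H}})$ subcarriers, so that each subcarrier of the block is designated to carry pilots from at most $\lceil N_{\mathcal{U}}/\min(N_{\mathcal{C}},N_{\mathcal{H}})\rceil$ UEs, and each UE is designated to exactly one subcarrier per sub-block --- that is, to $N_{\mathcal{V}}$ subcarriers in total within the group (which equals $1$ precisely when $N_{\mathcal{C}}\le N_{\mathcal{H}}$, i.e., the whole group fits in one coherence band). On each subcarrier I assign its designated UEs mutually orthogonal pilot sequences, which is possible exactly when the pilot length is at least $\lceil N_{\mathcal{U}}/\min(N_{\mathcal{C}},N_{\mathcal{H}})\rceil$; a UE stays silent on a subcarrier to which it is not designated during the pilot phase.

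To finish I would verify the two assertions. No pilot contamination: on any subcarrier only its designated UEs are active during training and they carry mutually orthogonal sequences, so a matched-filter (equivalently least-squares) correlation at the AP recovers each active UE's $h_{imk}^{\mathcal{P}}$ on that subcarrier free of interference from the others; and since that UE's channel is (approximately) the same throughout its sub-block, this one estimate furnishes its channel on every subcarrier of that sub-block, so across the $N_{\mathcal{V}}$ sub-blocks the AP obtains the channels of all $N_{\mathcal{U}}$ UEs on all $N_{\mathcal{C}}$ subcarriers of the group, as needed for data detection. Sufficiency of the stated $N_{\mathcal{P}}$: the most heavily loaded subcarrier in the round-robin carries $\lceil N_{\mathcal{U}}/\min(N_{\mathcal{C}},N_{\mathcal{H}})\rceil$ UEs, and this many UEs require orthogonal sequences of length at least that value, so the construction is tight and the minimum sufficient pilot length is exactly $N_{\mathcal{P}}$.

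The one genuinely delicate point I anticipate is the sub-block partitioning: a group's $N_{\mathcal{C}}$ contiguous subcarriers need not be aligned with the coherence-bandwidth grid, in which case they could straddle one extra coherence interval, making the true number of distinct bands $\lceil N_{\mathcal{C}}/N_{\mathcal{H}}\rceil + 1$ rather than $N_{\mathcal{V}}$. I would deal with this by taking the subcarrier groups to be aligned to multiples of $N_{\mathcal{H}}$ --- a design choice consistent with the static allocation of Section~\ref{sec:systemmodel} --- or by absorbing the boundary effect into the (conservatively chosen) definition of $N_{\mathcal{H}}$; everything else is elementary counting together with the orthogonality of the assigned pilot sequences.
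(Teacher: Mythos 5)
Your construction is correct and follows essentially the same counting argument as the paper's own proof: for $N_{\mathcal{C}}\le N_{\mathcal{H}}$ you spread the $N_{\mathcal{U}}$ UEs' pilots over the $N_{\mathcal{C}}$ subcarriers of the group, and for $N_{\mathcal{C}}>N_{\mathcal{H}}$ over the $N_{\mathcal{H}}$ subcarriers of each coherence sub-block, arriving at the same $N_{\mathcal{P}}=\left\lceil N_{\mathcal{U}}/\min(N_{\mathcal{C}},N_{\mathcal{H}})\right\rceil$ that the paper obtains by treating the two divisibility cases $N_{\mathcal{H}}=kN_{\mathcal{C}}$ and $N_{\mathcal{C}}=kN_{\mathcal{H}}$, with your round-robin assignment simply making the paper's terse counting explicit. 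The one slip is the aside that a shorter last sub-block ``only helps'' --- it would in fact raise the per-subcarrier pilot load on that sub-block --- but since the paper itself restricts to the exactly divisible cases and you separately flag and resolve the grid-alignment issue in your closing paragraph, this does not affect the validity of the argument.
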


\begin{proof}
In the setting described in the Proposition, two possibilities arise: 

\textbf{Case 1}: $N_{\mathcal{H}} = k N_{\mathcal{C}}, \  k \geq 1$. In this case, instead of requiring all $N_{\mathcal{U}}$ UEs to transmit pilots on every subcarrier, it is sufficient if $\left \lceil {\frac{N_{\mathcal{U}}}{N_{\mathcal{C}}}} \right \rceil$ users transmit pilots on each subcarrier. Thus, one can set the length of the pilot sequence as
\begin{equation}\label{eq:np1}
    N_{\mathcal{P}} = \left \lceil{\frac{N_{\mathcal{U}}}{N_{\mathcal{C}}}} \right \rceil.
\end{equation}
This value is slightly suboptimal in the sense if $N_{\mathcal{U}}$ were equal to 1, we could have set $N_{\mathcal{P}} = 1$ for one of the $N_{\mathcal{C}}$ subcarriers assigned to a UE and $N_{\mathcal{P}}$ = 0 for other subcarriers. Each UE transmits pilots on only one of the $N_{\mathcal{C}}$ subcarriers alloted to it.

\textbf{Case 2}:  $N_{\mathcal{C}} = k N_{\mathcal{H}}, \  k \geq 1$. In this case, it is sufficient if each UE transmits pilots on at least one subcarrier in each coherence block. But one coherence block can be shared among $N_{\mathcal{U}}$ users. Thus, one can set

\begin{equation}\label{eq:np2_pr}
    N_{\mathcal{P}} = \left \lceil{\frac{N_{\mathcal{U}}}{N_{\mathcal{H}}}} \right \rceil.
\end{equation}
Combining the two cases above, the minimum pilot sequence length required to estimate the channels without pilot contamination is at most equal to
\begin{equation}\label{eq:npfinal_proof}
    N_{\mathcal{P}} = \left \lceil{\frac{N_{\mathcal{U}}}{\textrm{min}\left(N_{\mathcal{C}}, N_{\mathcal{H}}\right)}} \right \rceil.
\end{equation}
Note that, in either case, each UE transmits pilots on ${N_{\mathcal{V}} = \lceil{N_{\mathcal{C}}/N_{\mathcal{H}}}\rceil}$ subcarriers. 
\end{proof}

\subsection{Channel Estimation}

Now, we focus on channel estimation. Recall that $N_{\mathcal{U}}$ UEs are allotted to each subcarrier, but not all $N_{\mathcal{U}}$ UEs transmit pilots on every subcarrier. Instead, at most $N_{\mathcal{P}}$ UEs transmit pilots on a given subcarrier. The UEs transmitting their pilots on subcarrier $i$ use an orthogonal pilot book $\boldsymbol {\Phi }_{i}\in \mathbb{C}^{N_{\mathcal{P}} \times N_{\mathcal{P}}}$  to estimate the channel, where $\boldsymbol {\Phi }_{i}$ satisfies $\boldsymbol {\Phi }_{i}\boldsymbol {\Phi }_{i}^{H}=N_{\mathcal {P}}\mathbf {I}_{N_{\mathcal {P}}}$. Each column of $\boldsymbol{\Phi }_{i}$ contains the $N_{\mathcal{P}}$-length pilot sequence transmitted by a different user, and each user transmits its pilot in the $i$-th subcarrier over $N_{\mathcal{P}}$ consecutive OFDM symbols. We assume that each participating UE divides its transmit power equally among the $N_{\mathcal{V}}$ subcarriers on which it is scheduled to transmit pilots.

To estimate the channel on subcarrier $i$, the AP correlates the received pilot signal with each of the $N_{\mathcal{P}}$ pilot sequences and stacks the channel estimates of the $N_{\mathcal{U}}$ UEs that share subcarrier $i$, to obtain the least-squares (LS) estimate
\begin{equation}
    \hat {\mathbf {H}}_{i}=  \mathbf {H}_{i}^{\mathcal {P}} + \frac{1}{N_{\mathcal{P}}} \sqrt{\frac{N_{\mathcal{V}}}{P_{\textrm{T}}}} \left ({\mathbf {U}_{i}'+\mathbf {N}}_{i}'\right ) = \mathbf {H}_{i}^{\mathcal {P}} + \mathbf {G}_{i}^{\mathcal {P}} \in \mathbb{C}^{N_{B} \times N_{\mathcal{U}}}, \label{eq:estimate2}
\end{equation}
where the columns of $\mathbf{U}_{i}' \in \mathbb{C}^{N_{B} \times N_{\mathcal{U}}}$ and $\mathbf{N}_{i}' \in \mathbb{C}^{N_{B} \times N_{\mathcal{U}}}$ are independent and have the same distribution  as that of $\mathbf{u}_{i} \in \mathbb{C}^{N_{B}}$ and $\mathbf{n}_{i} \in \mathbb{C}^{N_{B}}$ in \eqref{eq:ri}, respectively, due to the orthogonal nature of the pilots. The term $\mathbf {G}_{i}^{\mathcal {P}}$ in the above expression captures the error incurred during channel estimation due to both ICI and AWGN (we discuss more about $\mathbf {G}_{i}^{\mathcal {P}}$ later in the sequel). Note that the channel estimation error is uncorrelated with the channel estimate.

Figure \ref{fig:nmse} shows a simulated plot of the normalized mean square error (NMSE) in channel estimation drawn as a function of the pilot SNR for multi-carrier and single-carrier systems at $V_{\textrm{max}} = 25~\textrm{m/s}$. The parameters chosen to generate this figure are: $N_{\mathcal{B}} = 256$, $N_{\mathcal{U}} = 4$, $N_{\mathcal{C}} = 1$, $\Delta f = 10$~kHz, and $f_{\textrm{c}} = 3$~GHz. The NMSE is computed by taking the ratio between the Frobenius norm of the channel estimation error and the Frobenius norm of the true channel (the ICI component is absent in the single-carrier system). The NMSE is independent of the number of AP antennas, since each antenna can estimate the channel independently in this case.
In a single-carrier system, the AP can obtain accurate channel estimates provided the pilot symbols are recieved at sufficiently high power. In contrast, in a multi-carrier system, user mobility induces ICI. Due to this, even at high pilot power, the channel estimates at the AP remain noisy because the power in the ICI also scales with the transmit power. As a result, we see an error floor in the NMSE with increasing pilot power.

\begin{figure}
    \centering
    \includegraphics[width=4.5in]{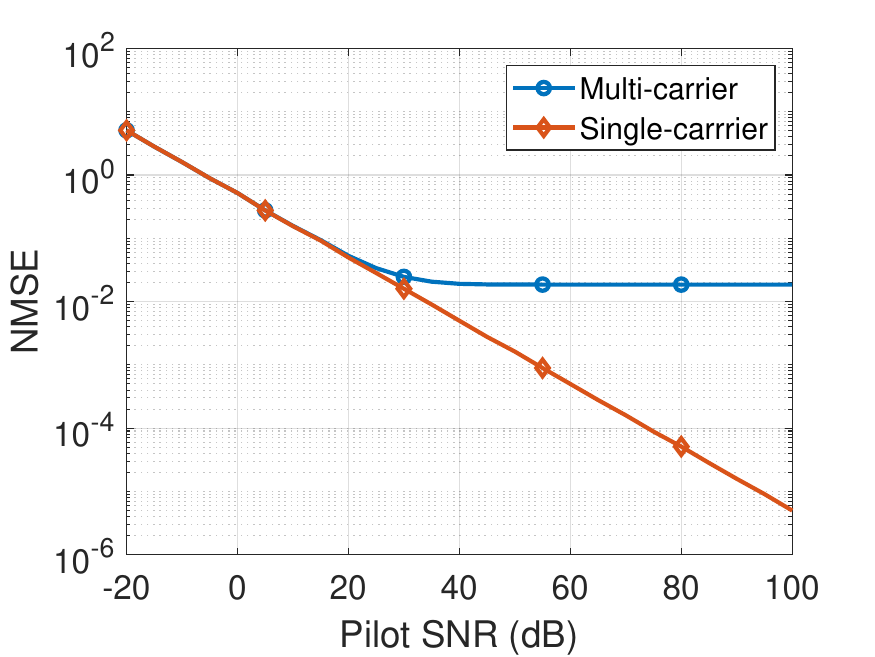}
    \caption{Plot of NMSE vs. pilot SNR for single-carrier and multi-carrier systems at $V_{\textrm{max}} = 25$m/s with $N_{\mathcal{B}} = 256$, $N_{\mathcal{U}} = 4$, $\Delta f = 10~\textrm{kHz}$ and $f_{\textrm{c}} = 3~\textrm{GHz}$.}
    \label{fig:nmse}
\end{figure}

\section{ICI Analysis}\label{sec:ici}
Now, let us focus our attention to ICI. Due to the frequency offset, the fraction of the total signal power that leaks from subcarrier $i$ onto subcarrier $j$ can be computed as~\cite{zhang_tcom_2017}
\begin{equation}
    L_{i}(f_{j})\triangleq \frac{\int _{0}^{2\pi }\int_{0}^{V_{\textrm {max}}}\textrm {sinc}^{2}\left({\left({f_{i}\,-\,f_{j}+\frac{v}{\textrm {c}}f_{\textrm{c}}\cos(\psi)}\right)T_{\textrm{s}}}\right)dvd\psi}{2\pi V_{\textrm {max}}},
\end{equation}
where $\textrm{sinc}(x) \triangleq \frac{\textrm{sin}(\pi x)}{\pi x}$. Now, the total ICI power received by the $i$-th subcarrier from UEs transmitting on subcarrier $f_{j}$ is $P_{\textrm{ICI},ijm} = P_{\textrm{T}} \sum_{k=1}^{N_{\mathcal{U}}}\eta_{jk}L_{j}(f_{i})$. Contrariwise, the ICI power recieved by subcarrier $f_{j}$ from subcarrier $f_{i}$ is given by $P_{\textrm {ICI},jim} = P_{\textrm{T}} \sum_{k=1}^{N_{\mathcal{U}}} \eta_{ik} L_{i}(f_{j})$. While it can be shown that $L_{i}(f_{j}) = L_{j}(f_{i})$, the quantities $P_{\textrm {ICI}, jim}$ and $P_{\textrm {ICI},ijm}$ may not always be equal since the equality $\sum_{k=1}^{N_{\mathcal{U}}} \eta_{jk} = \sum_{k=1}^{N_{\mathcal{U}}} \eta_{ik}$ may not always hold true. Thus, contrary to the result presented in \cite{zhang_tcom_2017} for the case where each user transmits on a single subcarrier, when users are assigned multiple subcarriers, two subcarriers may not always incur the same amount of ICI power from each other. However, the quantity that is of interest to us is the sum total of ICI contributions from all subcarriers onto a target subcarrier, say the $i$-th subcarrier, at the $m$-th AP antenna. This is computed as follows:
\begin{align}
    P_{\textrm {ICI},im} =& \, \sum _{\substack {j=1\\ j \neq i}}^{N_{\mathcal {G}}}P_{\textrm {ICI},ijm} \\
    =& P_{\textrm{T}} \sum _{\substack {j=1\\ j \neq i}}^{N_{\mathcal {G}}} \sum_{k=1}^{N_\mathcal{U}} \eta_{jk} L_{i}( f_{j} ) \\ 
    =& \, P_{\textrm{T}} \sum_{k=1}^{N_\mathcal{U}} \left( \sum_{j=1}^{N_\mathcal{G}} \eta_{jk} L_{i}( f_{j} ) - \eta_{ik} L_{i}(f_{i})\right). \label{eq:pici1}
\end{align}
Figure \ref{fig:ici_distr} shows subplots of $P_{\text{ICI},im}$ drawn as a function of the target subcarrier index $i$ for different values of  $V_{\textrm{max}}$ and $N_{\mathcal{U}}$, with the UEs dividing their power across their allotted subcarriers uniformly and at random such that $0 \leq \eta_{jk} \leq 1$ and $\sum\limits_{j=lN_{\mathcal{C}} + 1}^{(l+1)N_{\mathcal{C}}} \eta_{jk} = 1$ for all $l \in \{0, 1, \dots, L-1\}$ and all $k \in \{1, 2, \dots, N_{\mathcal{U}}\}$. The parameter values used to generate the plot are as follows: $N_{\mathcal{R}} = 2048\,\textrm{users}$, $N_{\mathcal{G}} = 512\,\textrm{subcarriers}$, $P_{\textrm{T}} = 10\,\textrm{dB}$, $T_{\textrm{s}} = 10^{-4}\,\textrm{sec}$ and $f_{\textrm{c}} = 3\,\textrm{GHz}$. When the total number of subcarriers $N_{\mathcal{G}}$ is sufficiently large, we observe that $P_{\textrm{ICI}, im}$ is roughly constant regardless of how individual UEs allocate their transmit power among the subcarriers. Moreover, the variations in the ICI curves that result from allocating transmit power randomly to the subcarriers reduce as the value of $N_{\mathcal{U}}$ increases. This lends the ICI power a deterministic nature, i.e., it does not matter how UEs distribute their powers to their subcarriers, the ICI power will remain fixed as long as the other parameters like $V_{\textrm{max}}$, $f_{\textrm{c}}$ and $T_{\textrm{s}}$ remain fixed.

To obtain an accurate closed form expression for the ICI power, we use the above observation, assume uniform power allocation across all subcarriers and substitute $\eta_{jk} = \eta_{ik} = \frac{1}{N_{\mathcal{C}}}$ in \eqref{eq:pici1}. We also remark that equal power allocation across subcarriers is near-optimal in the massive MIMO regime, as the effective channel is roughly constant across the different subcarriers due to channel hardening. This gives
\begin{figure}
    \centering
    \subfigure[$N_{\mathcal{U}}=8$]
    {
        \includegraphics[width=3.0in]{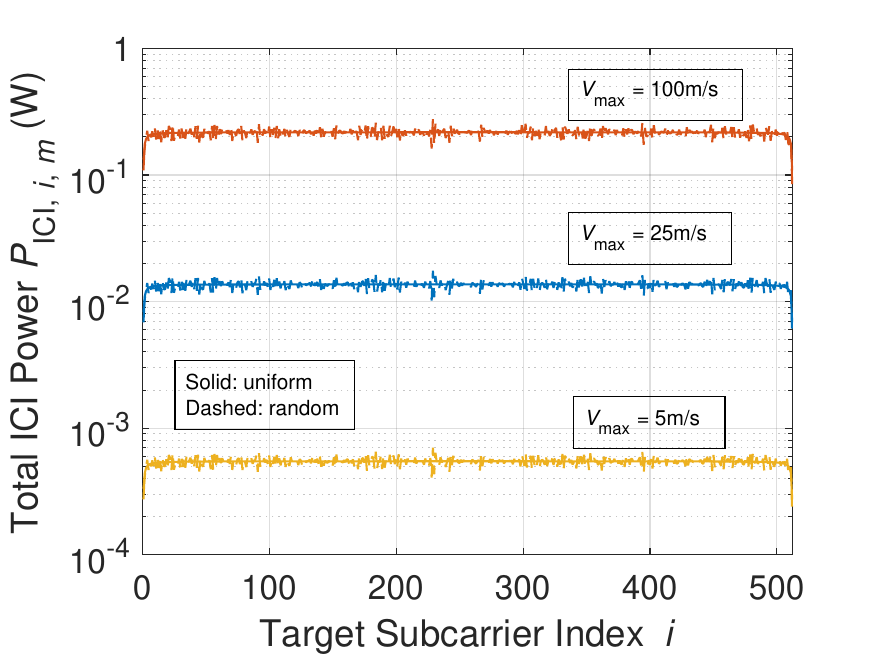}
        \label{fig:ici_distr_Nu8}
    }
    \subfigure[$N_{\mathcal{U}}=16$]
    {
        \includegraphics[width=3.0in]{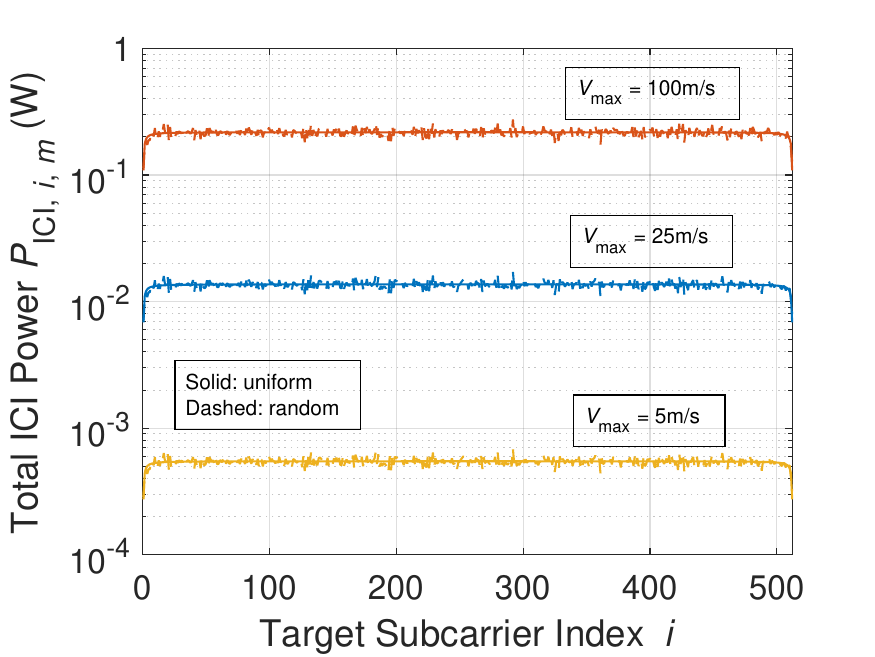}
        \label{fig:ici_distr_Nu16}
    }\\
    \subfigure[$N_{\mathcal{U}}=32$]
    {
        \includegraphics[width=3.0in]{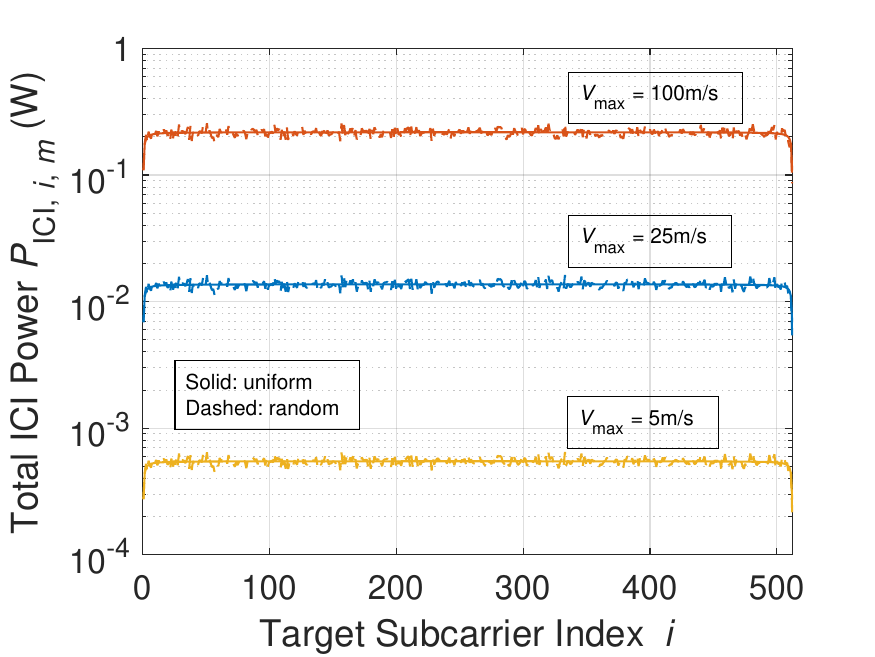}
        \label{fig:ici_distr_Nu32}
    }
    \subfigure[$N_{\mathcal{U}}=64$]
    {
        \includegraphics[width=3.0in]{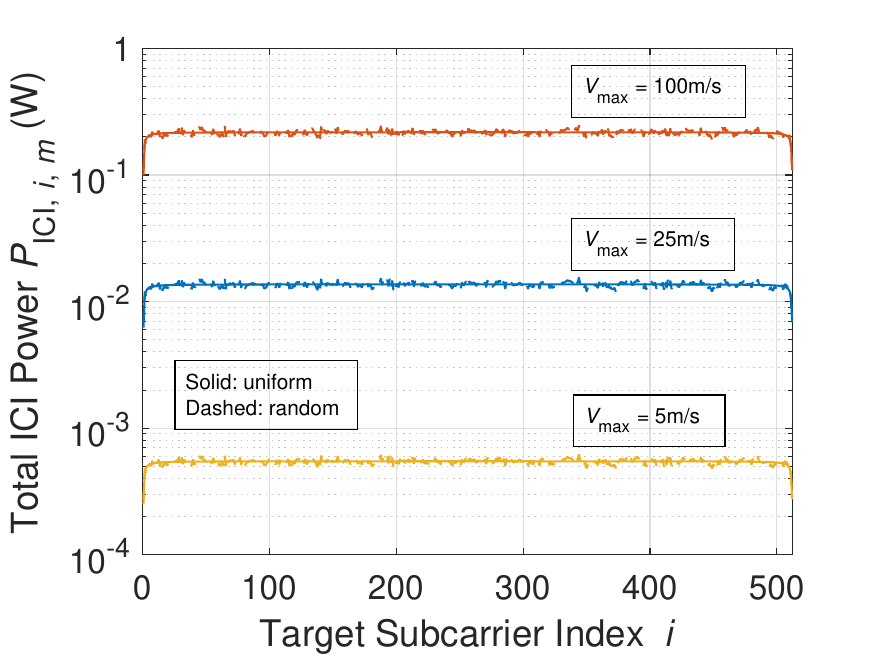}
        \label{fig:ici_distr_Nu64}
    }\\
    \caption{Variation of the ICI power across subcarrier frequencies for both uniform and random transmit power allocation. The dashed curves corresponding to random allocation fluctuate around the solid "deterministic" curves. Additionally, it can be observed that the variations die down as the value of $N_{\mathcal{U}}$ increases.}
    \label{fig:ici_distr}
\end{figure}

\begin{align}
    P_{\textrm {ICI}} =& \, \frac{P_{\textrm{T}}}{N_{\mathcal{C}}} \sum_{k=1}^{N_\mathcal{U}} \left( \sum_{j=1}^{N_\mathcal{G}} L_{i}( f_{j} ) - L_{i}( f_{i} ) \right) \label{eq:pici2} \\
    =& \, \frac{N_{\mathcal{U}}P_{\textrm{T}}}{N_{\mathcal{C}}} \left(1 - \frac {2}{\pi}\int _{0}^{\frac {\pi }{2}}\left [{ \frac {\textrm {Si}\left ({2 b\cos \left ({\psi }\right ) }\right )}{ b\cos \left ({\psi }\right )}-\frac {\sin ^{2}\left ({b\cos \left ({\psi }\right )}\right )}{\left ({b\cos \left ({\psi }\right )}\right )^{2}} }\right ]d\psi \right), \label{eq:pici3}
\end{align}
where $\textrm {Si}\left ({z}\right )\triangleq \int _{0}^{z}\frac {\sin \left ({t}\right )}{t}dt$ and $b\triangleq \frac {\pi V_{\textrm {max}}f_{\textrm{c}}T_{\textrm {s}}}{\textrm {c}}$. For a given $b$ (i.e., for fixed values of $V_{\textrm{max}}$, $f_{\textrm{c}}$ and $T_{\textrm {s}}$), it can be observed that the total ICI power is a function of the ratio $\frac{N_{\mathcal{U}}}{N_{\mathcal{C}}}$. But, we know from \eqref{eq:system-model} that $\frac{N_{\mathcal{U}}} {N_{\mathcal{C}}} = \frac{N_{\mathcal{R}}}{N_{\mathcal{G}}}$ is a constant. Thus, 
allowing multiple UEs on a given subcarrier or assigning multiple subcarriers to a UE may not always increase the total ICI incurred in the system. This can be seen in Figure~\ref{fig:ici_distr}, where the value on the y-axis corresponding to a given $V_{\textrm{max}}$ remains the same across all four subplots. We highlight here that the transition from \eqref{eq:pici2} to \eqref{eq:pici3} is reasonable only when the number of subcarriers is sufficiently large (e.g., $N_{\mathcal{G}} \geq 256$). 

For sufficiently small values of $b\triangleq \frac {\pi V_{\textrm {max}}f_{\textrm{c}}T_{\textrm {s}}}{\textrm {c}}$ i.e., $b \ll 1$, the $P_{\textrm{ICI}}$ in \eqref{eq:pici3} can be approximated as $P_{\textrm{ICI}} \approx  \frac{1}{18} \left(\frac {\pi V_{\textrm {max}}f_{\textrm{c}}T_{\textrm {s}}}{\textrm {c}}\right)^{2}\frac{N_{\mathcal {U}}P_{\textrm {T}}}{N_{\mathcal{C}}}$, which shows that for low values of $b$, the total ICI power scales quadratically in $V_{\textrm{max}}$. In addition, the ICI power is directly proportional to the square of the carrier frequency and inversely proportional to square of the subcarrier spacing (since $T_{\textrm{s}} = \frac{1}{\Delta f}$). Note that $b \ll 1$ is valid in scenarios of practical interest; for example, $b = 0.04$ when $V_{\textrm{max}} = 100$~km/h, $f_c = 2$~GHz, $T_s = 70~\mu$s as in an LTE system. Now, using the fact that that $P_{\textrm{ICI}}$ is the sum of individual ICI components across many statistically independent subcarriers and paths, and by applying the central limit theorem, it can be shown that the quantity $u_{i,m}[n]$ in \eqref{eq:rim} conforms to the complex Gaussian distribution $\mathcal{CN}\left(0, \frac{N_{\mathcal {U}}P_{\textrm {T}}}{N_{\mathcal{C}}}\sigma_{\textrm{u}}^{2}\right)$ where $\sigma_{\textrm{u}}^{2} = \frac{1}{18}\left(\frac {\pi V_{\textrm {max}}f_{\textrm{c}}T_{\textrm {s}}}{\textrm {c}}\right)^{2}$ denotes the normalized ICI power.

\section{Sum-Rate Performance}\label{sec:sumrateperformance}
In this section, we evaluate the sum-rate performance of the cellular system for two different receive combining schemes, zero-forcing (ZF) and maximal-ratio (MR) combining.
\subsection{Zero-Forcing Combining at the AP}
According to \eqref{eq:ri}, the $n$-th received data vector at the AP is given by
\begin{equation} \label{eq:rxzf1}
    \mathbf {r}_{i}\left [{n}\right ]=\sqrt {P_{\textrm {T}}}\mathbf {H}_{i}^{\mathcal {D}}\left [{n}\right ]\mathbf{D}_{\eta_{i}}^{1/2}\mathbf {x}_{i}\left [{n}\right ] + \mathbf {u}_{i}\left [{n}\right ] + \mathbf {n}_{i}\left [{n}\right].
\end{equation}
Since the AP does not know the actual channel, it is useful to express $\mathbf {H}_{i}^{\mathcal {D}}\left [{n}\right]$ in the above expression in terms of the channel estimate $\hat{\mathbf{H}}_{i}$ obtained via uplink training. Substituting \eqref{eq:ar2} and \eqref{eq:estimate2} in \eqref{eq:rxzf1}, we obtain
\begin{align} \label{eq:rxzf2}
    \mathbf {r}_{i}\left [{n}\right ] =& \, \sqrt {P_{\textrm {T}}} \, \hat{\mathbf{H}}_{i} \boldsymbol {\Lambda }_{i}\left [{n}\right] \mathbf{D}_{\eta_{i}}^{1/2}\mathbf {x}_{i}\left [{n}\right ] - \sqrt {P_{\textrm {T}}} \, \mathbf {G}_{i}^{\mathcal {P}} \boldsymbol {\Lambda }_{i}\left [{n}\right] \mathbf{D}_{\eta_{i}}^{1/2}\mathbf {x}_{i}\left [{n}\right ] \nonumber \\ \quad& + \, \sqrt {P_{\textrm {T}}} \, \mathbf {G}_{i}^{\mathcal {D}}\left [{n}\right] \mathbf{D}_{\eta_{i}}^{1/2}\mathbf {x}_{i}\left [{n}\right ] + \mathbf {u}_{i}\left [{n}\right ] + \mathbf {n}_{i}\left [{n}\right].
\end{align}
Thus, the received signal can now be interpreted as the desired signal having passed through a known channel and corrupted by additive interference plus noise terms. The second and  third terms in the above expression capture the multi-user interference that occur due to (i) ICI and noise entailed in channel estimation, and (ii) channel aging error, respectively. The AP then processes the received signal using zero-forcing combining. This involves pre-multiplying $\mathbf{r}_{i}\left[n\right]$ in \eqref{eq:rxzf2} by the pseudoinverse of $ \hat{\mathbf{H}}_{i}$. Thus, we have 
\begin{align}
    \mathbf {y}_{i}\left [{n}\right] =& \, \sqrt {P_{\textrm {T}}}\left(\hat{\mathbf{H}}_{i}\right)^{\dagger} \hat{\mathbf{H}}_{i} \boldsymbol {\Lambda }_{i}\left [{n}\right ]\mathbf{D}_{\eta_{i}}^{1/2}\mathbf {x}_{i}\left [{n}\right] - 
    \sqrt {P_{\textrm {T}}} \left (\hat{\mathbf{H}}_{i}\right)^{\dagger} \mathbf {G}_{i}^{\mathcal {P}} \boldsymbol {\Lambda }_{i}\left [{n}\right ]\mathbf{D}_{\eta_{i}}^{1/2}\mathbf {x}_{i}\left [{n}\right ] \nonumber \\ \quad& +
    \, \sqrt {P_{\textrm {T}}} \left (\hat{\mathbf{H}}_{i}\right )^{\dagger} \mathbf {G}_{i}^{\mathcal {D}}\left [{n}\right ]\mathbf{D}_{\eta_{i}}^{1/2}\mathbf {x}_{i}\left [{n}\right ] +
    \left (\hat{\mathbf{H}}_{i}\right )^{\dagger} \left(\mathbf {u}_{i}\left [{n}\right ] + \mathbf {n}_{i}\left [{n}\right ]\right). \label{eq:rxzf3}
\end{align}
The achievable uplink sum-rate of the above system is given by the following theorem. 
\begin{theorem}
The achievable uplink sum-rate of the cellular system when zero-forcing combining is employed at the AP is given by
\begin{equation}\label{eq:sumratezf}
    \textrm {C}_{\textrm{sum}}^{\textrm{zf,ul}}\left [{N_{\mathcal {D}}}\right ] = \sum_{i=1}^{N_{\mathcal{G}}} \sum_{k=1}^{N_{\mathcal{U}}} \textrm {C}^{\textrm{zf,ul}}_{\textrm {avg.}, ik}\left [{N_{\mathcal {D}}}\right],
\end{equation}
where $\textrm {C}^{\textrm{zf,ul}}_{\textrm {avg.}, ik}\left [{N_{\mathcal {D}}}\right]=\frac {1}{N_{\mathcal {P}}+N_{\mathcal {D}}} \sum _{n=1}^{N_{\mathcal {D}}}\textrm {C}^{\textrm{zf,ul}}_{ik}[n]$ denotes the average achievable uplink rate of UE$_{ik}$ on the $i$-th subcarrier across $N_{\mathcal{D}}$ transmissions, with $\textrm {C}^{\textrm{zf,ul}}_{ik}[n]$ denoting the achievable uplink rate of UE$_{ik}$'s $n$-th transmission on the $i$-th subcarrier. The quantity $\textrm {C}^{\textrm{zf,ul}}_{ik}[n]$ is given by
\begin{equation}\label{eq:ratezf}
    \textrm {C}^{\textrm{zf,ul}}_{ik}[n] \geq \Delta f \log_{2}\left(1 + \frac{(N_{\mathcal{B}} - N_{\mathcal{U}} + 1)\eta_{ik} \sigma _{\hat{\textrm {h}}}^{2} \bar {\lambda }\left [{n}\right ]}{\left(\frac{N_{\mathcal {U}}\sigma _{\textrm {u}}^{2}}{N_{\mathcal{C}}} + \frac{\sigma _{\textrm {n}}^{2}}{P_{\textrm {T}}}\right)\left(1+\frac{N_{\mathcal{V}}  \bar{\eta}_{i}}{N_{\mathcal{P}}} \bar {\lambda }\left [{n}\right ]\right) + \bar{\eta}_{i} \sigma _{\textrm {h}}^{2}\left ({1-\bar {\lambda }\left [{n}\right ]}\right )}\right),
\end{equation}
in which $\bar{\eta}_{i} = \sum_{k=1}^{N_{\mathcal{U}}} \eta_{ik}$,  ${N_{\mathcal{V}}  = \lceil{N_{\mathcal{C}}/N_{\mathcal{H}}}\rceil}$; $\bar {\lambda }\left [{n}\right ]=\frac {1}{V_{\textrm {max}}}\int _{0}^{V_{\textrm {max}}}J_{0}^{2}\left({\frac {2\pi vf_{\textrm{c}}nT_{\textrm {s}}}{\textrm {c}}}\right)dv$ denotes the expectation of the diagonal entries in $\left ({\boldsymbol {\Lambda }_{i}\left [{n}\right ]}\right )^{2}$, and $\sigma_{\hat{\textrm{h}}}^{2} = \sigma_{\textrm{h}}^{2} + \frac{N_{\mathcal{V}}  N_{\mathcal{U}}}{N_{\mathcal{P}} N_{\mathcal{C}}} \sigma_{\textrm{u}}^{2} + \frac{N_{\mathcal{V}} }{N_{\mathcal{P}}P_{\textrm{T}}} \sigma_{\textrm{n}}^{2}$ denotes the variance of the entries in $\hat{\mathbf{H}}_{i}$.
\end{theorem}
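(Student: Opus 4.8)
The plan is to extract the scalar seen by UE$_{ik}$ after zero-forcing combining, treat the remaining terms as worst-case uncorrelated Gaussian noise to obtain an achievable rate, compute the resulting SINR conditioned on the channel estimates $\hat{\mathbf H}_i$ and on the UE velocities, and then average over these quantities.

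First, I would take the $k$-th component of $\mathbf y_i[n]$ in \eqref{eq:rxzf3}. In the regime $N_{\mathcal U}<N_{\mathcal B}$ the matrix $\hat{\mathbf H}_i$ has full column rank, so $(\hat{\mathbf H}_i)^{\dagger}\hat{\mathbf H}_i=\mathbf I_{N_{\mathcal U}}$ and the first term yields the clean desired symbol $\sqrt{P_{\textrm T}\eta_{ik}}\,\rho_{ik}[n]\,x_{ik}[n]$, while the second, third and fourth terms of \eqref{eq:rxzf3} contribute, respectively, a channel-estimation-error term $-\sqrt{P_{\textrm T}}\big[(\hat{\mathbf H}_i)^{\dagger}\mathbf G_i^{\mathcal P}\boldsymbol\Lambda_i[n]\mathbf D_{\eta_i}^{1/2}\mathbf x_i[n]\big]_k$, a channel-aging term $\sqrt{P_{\textrm T}}\big[(\hat{\mathbf H}_i)^{\dagger}\mathbf G_i^{\mathcal D}[n]\mathbf D_{\eta_i}^{1/2}\mathbf x_i[n]\big]_k$, and a filtered ICI-plus-noise term $\big[(\hat{\mathbf H}_i)^{\dagger}(\mathbf u_i[n]+\mathbf n_i[n])\big]_k$. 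Since $x_{ik}[n]$ is independent of the other UEs' symbols, of the AWGN, of the ICI, and of $\mathbf G_i^{\mathcal P},\mathbf G_i^{\mathcal D}[n]$, all three residual terms are uncorrelated with the desired symbol, so the worst-case-uncorrelated-Gaussian-noise bound gives $\textrm C^{\textrm{zf,ul}}_{ik}[n]\ge\Delta f\log_2\!\big(1+\textrm{SINR}_{ik}[n]\big)$, with $\textrm{SINR}_{ik}[n]$ the ratio of the desired-symbol power to the sum of the conditional variances of the three residual terms.

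Next, I would condition on $\hat{\mathbf H}_i$ and the velocities. The desired-symbol power is $P_{\textrm T}\eta_{ik}\rho_{ik}^2[n]$, and, writing $(\hat{\mathbf H}_i)^{\dagger}=(\hat{\mathbf H}_i^H\hat{\mathbf H}_i)^{-1}\hat{\mathbf H}_i^H$, each residual term carries the common factor $\big[(\hat{\mathbf H}_i^H\hat{\mathbf H}_i)^{-1}\big]_{kk}$. Using the characterization $u_{im}\sim\mathcal{CN}\big(0,\tfrac{N_{\mathcal U}P_{\textrm T}}{N_{\mathcal C}}\sigma_{\textrm u}^2\big)$ from Section~\ref{sec:ici}, the ICI-plus-noise term has variance $\big[(\hat{\mathbf H}_i^H\hat{\mathbf H}_i)^{-1}\big]_{kk}\big(\tfrac{N_{\mathcal U}P_{\textrm T}}{N_{\mathcal C}}\sigma_{\textrm u}^2+\sigma_{\textrm n}^2\big)$; since $\mathbf G_i^{\mathcal D}[n]$ is independent of $\hat{\mathbf H}_i$ with its $j$-th column of per-entry variance $\sigma_{\textrm h}^2\big(1-\rho_{ij}^2[n]\big)$, the aging term has variance $P_{\textrm T}\big[(\hat{\mathbf H}_i^H\hat{\mathbf H}_i)^{-1}\big]_{kk}\sigma_{\textrm h}^2\sum_j\eta_{ij}\big(1-\rho_{ij}^2[n]\big)$; and, invoking the uncorrelatedness of $\mathbf G_i^{\mathcal P}$ and $\hat{\mathbf H}_i$ noted after \eqref{eq:estimate2} and reading off $\mathrm{Var}(\mathbf G_i^{\mathcal P})=\sigma_{\hat{\textrm h}}^2-\sigma_{\textrm h}^2=\tfrac{N_{\mathcal V}}{N_{\mathcal P}}\big(\tfrac{N_{\mathcal U}}{N_{\mathcal C}}\sigma_{\textrm u}^2+\tfrac{\sigma_{\textrm n}^2}{P_{\textrm T}}\big)$ from \eqref{eq:estimate2}, the estimation-error term has variance $P_{\textrm T}\big[(\hat{\mathbf H}_i^H\hat{\mathbf H}_i)^{-1}\big]_{kk}\big(\sigma_{\hat{\textrm h}}^2-\sigma_{\textrm h}^2\big)\sum_j\eta_{ij}\rho_{ij}^2[n]$. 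Replacing $\rho_{ij}^2[n]$ by its velocity average $\bar\lambda[n]$, using $\sum_j\eta_{ij}=\bar\eta_i$, and factoring $\big(\tfrac{N_{\mathcal U}\sigma_{\textrm u}^2}{N_{\mathcal C}}+\tfrac{\sigma_{\textrm n}^2}{P_{\textrm T}}\big)$ out of the ICI/noise and estimation-error contributions (which produces the factor $1+\tfrac{N_{\mathcal V}\bar\eta_i}{N_{\mathcal P}}\bar\lambda[n]$) leaves $\textrm{SINR}_{ik}[n]=\dfrac{\eta_{ik}\bar\lambda[n]}{\big[(\hat{\mathbf H}_i^H\hat{\mathbf H}_i)^{-1}\big]_{kk}\big[\big(\tfrac{N_{\mathcal U}\sigma_{\textrm u}^2}{N_{\mathcal C}}+\tfrac{\sigma_{\textrm n}^2}{P_{\textrm T}}\big)\big(1+\tfrac{N_{\mathcal V}\bar\eta_i}{N_{\mathcal P}}\bar\lambda[n]\big)+\bar\eta_i\sigma_{\textrm h}^2\big(1-\bar\lambda[n]\big)\big]}$.

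Finally, I would average over $\hat{\mathbf H}_i$. Because its entries are i.i.d.\ $\mathcal{CN}(0,\sigma_{\hat{\textrm h}}^2)$, the matrix $\hat{\mathbf H}_i^H\hat{\mathbf H}_i$ is a scaled complex Wishart matrix and $\big(\big[(\hat{\mathbf H}_i^H\hat{\mathbf H}_i)^{-1}\big]_{kk}\big)^{-1}$ has mean $\sigma_{\hat{\textrm h}}^2(N_{\mathcal B}-N_{\mathcal U}+1)$, i.e.\ the number of spatial degrees of freedom remaining after ZF nulls the $N_{\mathcal U}-1$ co-scheduled UEs. Using Jensen's inequality together with this moment to replace $1/\big[(\hat{\mathbf H}_i^H\hat{\mathbf H}_i)^{-1}\big]_{kk}$ by $\sigma_{\hat{\textrm h}}^2(N_{\mathcal B}-N_{\mathcal U}+1)$ turns $\textrm{SINR}_{ik}[n]$ into the argument of the logarithm in \eqref{eq:ratezf}; summing the per-pair rates over $i\in\{1,\dots,N_{\mathcal G}\}$, $k\in\{1,\dots,N_{\mathcal U}\}$ and averaging over the $N_{\mathcal D}$ data symbols with the overhead prefactor $1/(N_{\mathcal P}+N_{\mathcal D})$ gives \eqref{eq:sumratezf}. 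I expect the main obstacle to be the estimation-error term: $\mathbf G_i^{\mathcal P}=\hat{\mathbf H}_i-\mathbf H_i^{\mathcal P}$ is statistically linked to the ZF filter $(\hat{\mathbf H}_i)^{\dagger}$, so its conditional second moment must be controlled carefully (e.g.\ by decomposing $\mathbf G_i^{\mathcal P}$ into components correlated and uncorrelated with $\hat{\mathbf H}_i$, or by arguing that the worst-case-noise rate is insensitive to this dependence in the large-array limit); a related care point is the exact placement of the velocity- and $\hat{\mathbf H}_i$-averaging steps (Jensen / channel hardening) so that the displayed expression stands as a valid lower bound.
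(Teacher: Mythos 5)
Your proposal is correct and follows essentially the same route as the paper's Appendix~\ref{app:proof_theorem_1}: decompose the ZF output into the desired term plus the estimation-error, aging, and ICI-plus-AWGN terms, compute their conditional variances (each carrying the common factor $[(\hat{\mathbf H}_i^H\hat{\mathbf H}_i)^{-1}]_{kk}$), and then apply Jensen's inequality together with the Wishart moment $\mathbb{E}\{1/[(\hat{\mathbf Z}_i^H\hat{\mathbf Z}_i)^{-1}]_{kk}\}=N_{\mathcal B}-N_{\mathcal U}+1$. Your closing caveat about the statistical coupling between $\mathbf G_i^{\mathcal P}$ and the ZF filter is in fact a point the paper passes over by simply asserting uncorrelatedness, so your treatment is, if anything, more careful on that step.
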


\begin{proof}
See Appendix~\ref{app:proof_theorem_1}.
\end{proof}

The proof follows by first computing the second-order statistics of the signal and interference terms over the randomness in the channels and user velocities, conditioned on the estimated channels. Then, a closed-form expression of the uplink sum-rate is obtained using a result from random matrix theory for the term that depends on the channel estimate. This results in a simple rate expression which agrees well with the numerical results, as we will see in the next section.

Since $\bar{\lambda}\left[{n}\right]$ is a decreasing function of $n$, the uplink rate and the sum-rate both decrease with the transmission index. This is expected, as higher transmission indices entail more channel aging. Further, at low values of $V_{\textrm{max}}$, the value of $\bar{\lambda}\left[{n}\right]$ is close to 1. As $V_{\textrm{max}}$ increases, $\bar{\lambda}\left[{n}\right]$ drops more sharply with $n$. Thus, the drop in the uplink rate across a given transmission frame becomes higher as the UEs move faster. From \eqref{eq:sumratezf} and \eqref{eq:ratezf}, we observe that the uplink sum-rate is a function of the power control coefficient $\eta_{ik}$. Recall that there are two constraints on $\eta_{ik}$. First, $0 \leq \eta_{ik} \leq 1$. Second, all power control coefficients of a UE must add up to one. With these constraints it is easy to show that, in the large antenna regime,  in order for the system to deliver maximum sum-rate performance each UE must distribute its transmit power uniformly among the $N_{\mathcal{C}}$ subcarriers assigned to it. Substituting $\eta_{ik} = \frac{1}{N_{\mathcal{C}}}$ in \eqref{eq:sumratezf}, we obtain the expression for the uplink sum-rate of the system as
\begin{equation}\label{eq:maxsumratezf}
    \textrm {C}_{\textrm{max,sum}}^{\textrm{zf,ul}}\left [{N_{\mathcal {D}}}\right ] = \frac{1}{N_{\mathcal{P}} + N_{\mathcal{D}}} \sum_{i=1}^{N_{\mathcal{G}}} \sum_{k=1}^{N_{\mathcal{U}}} \sum_{n=1}^{N_{\mathcal{D}}} \textrm {C}^{\textrm{zf,ul}}_{\textrm{max}, ik}[n],
\end{equation}
where $\textrm {C}^{\textrm{zf,ul}}_{\textrm{max}, ik}[n]$ denotes the uplink rate of UE$_{ik}$ on the $i$-th subcarrier and is given by
\begin{equation}\label{eq:maxratezf}
    \textrm {C}^{\textrm{zf,ul}}_{\textrm{max}, ik}[n] \geq \Delta f \log_{2}\left(1 + \frac{(N_{\mathcal{B}} - N_{\mathcal{U}} + 1) \sigma _{\hat{\textrm {h}}}^{2} \bar {\lambda }\left [{n}\right ]}{N_{\mathcal{C}} \left(\left(\frac{N_{\mathcal {U}}\sigma _{\textrm {u}}^{2}}{N_{\mathcal{C}}} + \frac{\sigma _{\textrm {n}}^{2}}{P_{\textrm {T}}}\right)\left(1+\frac{N_{\mathcal{V}}  N_{\mathcal{U}}}{N_{\mathcal{P}} N_{\mathcal{C}}} \bar {\lambda }\left [{n}\right ]\right) + \frac{N_{\mathcal{U}}}{N_{\mathcal{C}}} \sigma _{\textrm {h}}^{2}\left ({1-\bar {\lambda }\left [{n}\right ]}\right )\right)}\right).
\end{equation}

We highlight here that $\textrm {C}^{\textrm{zf,ul}}_{\textrm{max}, ik}[n]$ is the uplink rate of a UE on only one subcarrier. The total rate from a single user on the $n$-th transmission equals $N_{\mathcal{C}}\,\textrm {C}^{\textrm{zf,ul}}_{\textrm{max}, ik}[n]$. We note that the above expression compactly captures the dependence of the sum-rate on the various system parameters. 

Next, we consider the sum-rate performance with maximal-ratio combining at the AP. 

\subsection{Maximal-Ratio Combining at the AP}
With maximal-ratio combining at the AP, the processed signal $\mathbf {y}_{i}\left [{n}\right]$ is computed as $\mathbf {y}_{i}\left [{n}\right] = \left(\hat{\mathbf{H}}_{i}\right)^{H} \mathbf {r}_{i}\left [{n}\right]$. Thus, we have 

\begin{equation}\label{eq:yxmrc1}
\begin{aligned}
    \mathbf {y}_{i}\left [{n}\right] = \sqrt {P_{\textrm {T}}}\left(\hat{\mathbf{H}}_{i}\right)^{H} \hat{\mathbf{H}}_{i} \boldsymbol {\Lambda }_{i}\left [{n}\right ]\mathbf{D}_{\eta_{i}}^{1/2}\mathbf {x}_{i}\left [{n}\right] - 
    \sqrt {P_{\textrm {T}}} \left (\hat{\mathbf{H}}_{i}\right)^{H} \mathbf {G}_{i}^{\mathcal {P}} \boldsymbol {\Lambda }_{i}\left [{n}\right ]\mathbf{D}_{\eta_{i}}^{1/2}\mathbf {x}_{i}\left [{n}\right ] \\ +
    \, \sqrt {P_{\textrm {T}}} \left (\hat{\mathbf{H}}_{i}\right )^{H} \mathbf {G}_{i}^{\mathcal {D}}\left [{n}\right ]\mathbf{D}_{\eta_{i}}^{1/2}\mathbf {x}_{i}\left [{n}\right ] +
    \left (\hat{\mathbf{H}}_{i}\right )^{H} \left(\mathbf {u}_{i}\left [{n}\right ] + \mathbf {n}_{i}\left [{n}\right ]\right).
\end{aligned}
\end{equation}

The following theorem characterizes the achievable uplink sum-rate of the above system.
\begin{theorem}
The achievable uplink sum-rate of the cellular system when maximal-ratio combining is deployed at the AP is given by
\begin{equation}\label{eq:sumratemrc}
    \textrm {C}_{\textrm{sum}}^{\textrm{mrc,ul}}\left [{N_{\mathcal {D}}}\right ] = \sum_{i=1}^{N_{\mathcal{G}}} \sum_{k=1}^{N_{\mathcal{U}}} \textrm {C}^{\textrm{mrc,ul}}_{\textrm {avg.}, ik}\left [{N_{\mathcal {D}}}\right],
\end{equation}
where $\textrm {C}^{\textrm{mrc,ul}}_{\textrm {avg.}, ik}\left [{N_{\mathcal {D}}}\right]=\frac {1}{N_{\mathcal {P}}+N_{\mathcal {D}}} \sum _{n=1}^{N_{\mathcal {D}}}\textrm {C}^{\textrm{mrc,ul}}_{ik}[n]$ denotes the average achievable uplink rate of UE$_{ik}$ on the $i$-th subcarrier across $N_{\mathcal{D}}$ transmissions, with $\textrm {C}^{\textrm{mrc,ul}}_{ik}[n]$ denoting the achievable uplink rate of UE$_{ik}$'s $n$-th transmission on the $i$-th subcarrier. The quantity $\textrm {C}^{\textrm{mrc,ul}}_{ik}[n]$ is given by
\begin{equation}\label{eq:ratemrc}
    \textrm {C}^{\textrm{mrc,ul}}_{ik}[n] \geq \Delta f \log_{2}\left(1 + \frac{N_{\mathcal{B}} \eta_{ik} \sigma _{\hat{\textrm {h}}}^{2} \bar {\lambda }\left [{n}\right ]}{\left(\frac{N_{\mathcal {U}}\sigma _{\textrm {u}}^{2}}{N_{\mathcal{C}}} + \frac{\sigma _{\textrm {n}}^{2}}{P_{\textrm {T}}}\right)\left(1+\frac{N_{\mathcal{V}}  \bar{\eta}_{i}}{N_{\mathcal{P}}} \bar {\lambda }\left [{n}\right ]\right) + \bar{\eta}_{i} \sigma _{\textrm {h}}^{2}\left ({1-\bar {\lambda }\left [{n}\right ]}\right ) + \sigma _{\hat{\textrm {{h}}}}^{2}\, \bar{\eta}_{i}\, \bar {\lambda }\left [{n}\right ] }\right),
\end{equation}
in which $\bar{\eta}_{i} = \sum_{k=1}^{N_{\mathcal{U}}} \eta_{ik}$,  ${N_{\mathcal{V}}  = \lceil{N_{\mathcal{C}}/N_{\mathcal{H}}}\rceil}$; $\bar {\lambda }\left [{n}\right ]=\frac {1}{V_{\textrm {max}}}\int _{0}^{V_{\textrm {max}}}J_{0}^{2}\left({\frac {2\pi vf_{\textrm{c}}nT_{\textrm {s}}}{\textrm {c}}}\right)dv$ denotes the expectation of the diagonal entries in $\left ({\boldsymbol {\Lambda }_{i}\left [{n}\right ]}\right )^{2}$, and $\sigma_{\hat{\textrm{h}}}^{2} = \sigma_{\textrm{h}}^{2} + \frac{N_{\mathcal{V}}  N_{\mathcal{U}}}{N_{\mathcal{P}} N_{\mathcal{C}}} \sigma_{\textrm{u}}^{2} + \frac{N_{\mathcal{V}} }{N_{\mathcal{P}}P_{\textrm{T}}} \sigma_{\textrm{n}}^{2}$ denotes the variance of the entries in $\hat{\mathbf{H}}_{i}$.
\end{theorem}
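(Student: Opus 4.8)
The plan is to establish the bound \eqref{eq:ratemrc} by exactly the \emph{use-and-then-forget} lower-bounding technique used for the zero-forcing case, the only structural change being that the MR combiner $(\hat{\mathbf{H}}_i)^{H}$ does not orthogonalize the columns of $\hat{\mathbf{H}}_i$, so the residual multi-user and beamforming-gain terms do not cancel. First I would extract the $k$-th entry $y_{ik}[n]$ of $\mathbf{y}_i[n]$ in \eqref{eq:yxmrc1}, i.e.\ the statistic used to decode $x_{ik}[n]$, and substitute $\mathbb{E}[\rho_{ik}^2[n]]=\bar{\lambda}[n]$ for the random squared Jakes' coefficients (as in the channel-aging literature), so that the aging-error entries carry variance $\sigma_{\textrm{h}}^2(1-\bar{\lambda}[n])$. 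Writing $\hat{\mathbf{h}}_{ik}$ for the $k$-th column of $\hat{\mathbf{H}}_i$, I would then split $y_{ik}[n] = \sqrt{P_{\textrm{T}}\eta_{ik}}\,\mathbb{E}[\|\hat{\mathbf{h}}_{ik}\|^2]\sqrt{\bar{\lambda}[n]}\,x_{ik}[n] + w_{ik}[n]$, where $w_{ik}[n]$ gathers all remaining terms and is uncorrelated with $x_{ik}[n]$. The standard worst-case-uncorrelated-noise inequality then gives $\textrm{C}^{\textrm{mrc,ul}}_{ik}[n]\ge \Delta f\log_2\!\big(1+\textrm{SINR}_{ik}[n]\big)$, with $\textrm{SINR}_{ik}[n]$ equal to the desired-signal power $P_{\textrm{T}}\eta_{ik}\,(\mathbb{E}[\|\hat{\mathbf{h}}_{ik}\|^2])^2\,\bar{\lambda}[n]$ divided by $\mathbb{E}[|w_{ik}[n]|^2]$.

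Next I would evaluate the mean and the second moments entering the numerator and $\mathbb{E}[|w_{ik}[n]|^2]$, using $\hat{\mathbf{H}}_i=\mathbf{H}_i^{\mathcal{P}}+\mathbf{G}_i^{\mathcal{P}}$ from \eqref{eq:estimate2}, the mutual independence of $\hat{\mathbf{H}}_i$ with each of $\mathbf{G}_i^{\mathcal{P}}$, $\mathbf{G}_i^{\mathcal{D}}[n]$, $\mathbf{u}_i[n]$, $\mathbf{n}_i[n]$, and the Gaussianity of all these quantities, so that the relevant fourth moments factor. Concretely: (i) $\mathbb{E}[\|\hat{\mathbf{h}}_{ik}\|^2]=N_{\mathcal{B}}\sigma_{\hat{\textrm h}}^2$ and $\textrm{Var}(\|\hat{\mathbf{h}}_{ik}\|^2)=N_{\mathcal{B}}\sigma_{\hat{\textrm h}}^4$, while $\mathbb{E}[|\hat{\mathbf{h}}_{ik}^{H}\hat{\mathbf{h}}_{ik'}|^2]=N_{\mathcal{B}}\sigma_{\hat{\textrm h}}^4$ for $k'\neq k$; summing the $k'=k$ beamforming-gain fluctuation and the $k'\neq k$ leakage from the first term of \eqref{eq:yxmrc1} yields $P_{\textrm{T}}\bar{\lambda}[n]\,N_{\mathcal{B}}\sigma_{\hat{\textrm h}}^4\,\bar{\eta}_i$, which (after the normalization below) is the term $\sigma_{\hat{\textrm h}}^2\bar{\eta}_i\bar{\lambda}[n]$ that is absent in the ZF expression \eqref{eq:ratezf}; (ii) the estimation-error$\times$estimate term, using $\mathbb{E}[|\hat{\mathbf{h}}_{ik}^{H}\mathbf{g}_{ik'}^{\mathcal{P}}|^2]=N_{\mathcal{B}}\sigma_{\hat{\textrm h}}^2(\sigma_{\hat{\textrm h}}^2-\sigma_{\textrm h}^2)$ with $\sigma_{\hat{\textrm h}}^2-\sigma_{\textrm h}^2=\frac{N_{\mathcal{V}}N_{\mathcal{U}}}{N_{\mathcal{P}}N_{\mathcal{C}}}\sigma_{\textrm u}^2+\frac{N_{\mathcal{V}}}{N_{\mathcal{P}}P_{\textrm{T}}}\sigma_{\textrm n}^2$, gives $P_{\textrm{T}}\bar{\lambda}[n]\,\bar{\eta}_i\,N_{\mathcal{B}}\sigma_{\hat{\textrm h}}^2(\sigma_{\hat{\textrm h}}^2-\sigma_{\textrm h}^2)$; (iii) the aging term contributes $P_{\textrm{T}}\bar{\eta}_i\,N_{\mathcal{B}}\sigma_{\hat{\textrm h}}^2\,\sigma_{\textrm h}^2(1-\bar{\lambda}[n])$; and (iv) the ICI-plus-AWGN term $\hat{\mathbf{h}}_{ik}^{H}(\mathbf{u}_i[n]+\mathbf{n}_i[n])$ contributes $N_{\mathcal{B}}\sigma_{\hat{\textrm h}}^2\big(\frac{N_{\mathcal{U}}P_{\textrm{T}}}{N_{\mathcal{C}}}\sigma_{\textrm u}^2+\sigma_{\textrm n}^2\big)$, using the ICI variance $\frac{N_{\mathcal{U}}P_{\textrm{T}}}{N_{\mathcal{C}}}\sigma_{\textrm u}^2$ derived in Section~\ref{sec:ici}. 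All cross-terms vanish by zero-mean/independence.

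Assembling these and factoring $N_{\mathcal{B}}\sigma_{\hat{\textrm h}}^2$ out of $\mathbb{E}[|w_{ik}[n]|^2]$, the $N_{\mathcal{B}}^2\sigma_{\hat{\textrm h}}^4$ in the numerator reduces to $N_{\mathcal{B}}\sigma_{\hat{\textrm h}}^2$; dividing numerator and denominator by $P_{\textrm{T}}$ then produces exactly the SINR inside \eqref{eq:ratemrc}, with $\bar{\eta}_i=\sum_{k}\eta_{ik}$ collecting the per-subcarrier power coefficients and the cross terms of (ii) reorganizing as $(\tfrac{N_{\mathcal{U}}\sigma_{\textrm u}^2}{N_{\mathcal{C}}}+\tfrac{\sigma_{\textrm n}^2}{P_{\textrm{T}}})\tfrac{N_{\mathcal{V}}\bar{\eta}_i}{N_{\mathcal{P}}}\bar{\lambda}[n]$. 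Summing $\textrm{C}^{\textrm{mrc,ul}}_{ik}[n]$ over the $N_{\mathcal{D}}$ data symbols with the overhead weight $\frac{1}{N_{\mathcal{P}}+N_{\mathcal{D}}}$ and then over all $(i,k)$ reproduces \eqref{eq:sumratemrc}. The main obstacle is the second-moment bookkeeping in step two — in particular, correctly separating the $k'=k$ self-term (which produces the $N_{\mathcal{B}}\sigma_{\hat{\textrm h}}^4$ beamforming-gain fluctuation, \emph{not} $N_{\mathcal{B}}^2\sigma_{\hat{\textrm h}}^4$) from the $k'\neq k$ multi-user leakage, and carefully justifying that $\mathbf{G}_i^{\mathcal{P}}$, $\mathbf{G}_i^{\mathcal{D}}[n]$ and the data-phase ICI/noise are mutually independent and independent of $\hat{\mathbf{H}}_i$, so that the fourth moments factor into products of second moments; everything else is routine and closely parallels the zero-forcing derivation.
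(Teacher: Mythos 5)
Your proposal is correct and follows essentially the same route as the paper's Appendix B: the use‑and‑then‑forget bound with the received statistic split into the deterministic desired term, the ICI/AWGN term, the estimation‑error and aging terms, the $k'\neq k$ non‑orthogonality leakage, and the beamforming‑gain fluctuation, with the same second moments (including $\mathrm{Var}(\|\hat{\mathbf{h}}_{ik}\|^2)=N_{\mathcal{B}}\sigma_{\hat{\textrm{h}}}^4$ and $\mathbb{E}[|\hat{\mathbf{h}}_{ik}^H\hat{\mathbf{h}}_{ik'}|^2]=N_{\mathcal{B}}\sigma_{\hat{\textrm{h}}}^4$) combining into the extra $\sigma_{\hat{\textrm{h}}}^2\bar{\eta}_i\bar{\lambda}[n]$ term. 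The only difference is cosmetic: the paper works with the normalized estimate $\hat{\mathbf{Z}}_i=\hat{\mathbf{H}}_i/\sigma_{\hat{\textrm{h}}}$ and a $1/\sqrt{N_{\mathcal{B}}}$ scaling, which your direct computation with $\hat{\mathbf{h}}_{ik}$ reproduces after cancelling the common factor $N_{\mathcal{B}}\sigma_{\hat{\textrm{h}}}^2 P_{\textrm{T}}$.
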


\begin{proof}
See Appendix~\ref{app:proof_theorem_2}.
\end{proof}

For the same reason as in the previous subsection, the system delivers maximum sum-rate performance when each UE shares the transmit power equally among the $N_{\mathcal{C}}$ subcarriers allotted to it. In this case, the sum-rate when MRC is employed at the AP can be simplified as 
\begin{equation}\label{eq:maxsumratemrc}
    \textrm {C}_{\textrm{max,sum}}^{\textrm{mrc,ul}}\left [{N_{\mathcal {D}}}\right ] = \frac{1}{N_{\mathcal{P}}  + N_{\mathcal{D}}} \sum_{i=1}^{N_{\mathcal{G}}} \sum_{k=1}^{N_{\mathcal{U}}} \sum_{n=1}^{N_{\mathcal{D}}} \textrm {C}^{\textrm{mrc,ul}}_{\textrm{max}, ik}[n],
\end{equation}
where $\textrm {C}^{\textrm{mrc,ul}}_{\textrm{max}, ik}[n]$ denotes the achievable uplink rate of UE$_{ik}$ on the $i$-th subcarrier in its $n$-th transmission and is given by
\begin{equation}\label{eq:maxratemrc}
    \textrm {C}^{\textrm{mrc,ul}}_{\textrm{max}, ik}[n] \geq \Delta f \log_{2}\left(1 \! + \! \frac{N_{\mathcal{B}}\,\sigma _{\hat{\textrm {h}}}^{2} \, \bar {\lambda }\left [{n}\right ]}{N_{\mathcal{C}}\left(\left(\frac{N_{\mathcal {U}}\sigma _{\textrm {u}}^{2}}{N_{\mathcal{C}}} \! +\! \frac{\sigma _{\textrm {n}}^{2}}{P_{\textrm {T}}}\right)\left(1\! + \! \frac{N_{\mathcal{V}}  N_{\mathcal{U}}}{N_{\mathcal{P}}N_{\mathcal{C}}} \bar {\lambda }\left [{n}\right ]\right) \! + \! \frac{N_{\mathcal{U}}}{N_{\mathcal{C}}} \sigma_{\textrm {h}}^{2}\left ({1 \! - \! \bar {\lambda }\left [{n}\right ]}\right ) \! + \! \frac{N_{\mathcal{U}}}{N_{\mathcal{C}}}\sigma _{\hat{\textrm {{h}}}}^{2} \, \bar {\lambda }\left [{n}\right]\right)}\right).
\end{equation}
The uplink rate of a UE equals $N_{\mathcal{C}}\,\textrm {C}^{\textrm{mrc,ul}}_{\textrm{max}, ik}[n]$ since each UE transmits on $N_{\mathcal{C}}$ subcarriers.

Comparing \eqref{eq:maxratezf} with \eqref{eq:maxratemrc}, we observe that the numerator of the uplink SINRs for the ZF and the MRC receiver are approximately the same when $N_{\mathcal{B}} \gg N_{\mathcal{U}}$. However, the presence of an additional $\frac{N_{\mathcal{U}}}{N_{\mathcal{C}}}\sigma _{\hat{\textrm {{h}}}}^{2} \, \bar {\lambda }\left [{n}\right]$ in the denominator of the SINR in \eqref{eq:maxratemrc}, which arises due to the fact that MRC only maximizes the desired signal power and ignores interference, will cause the MRC receiver to achieve lower sum-rate than the ZF receiver. The expressions also capture the effect of channel estimation errors, multi-user interference due to imperfect beamforming, channel aging, and inter-carrier interference due to the Doppler shifts caused by user mobility. For example, at a fixed $V_{\max}$ value, as the transmit symbol index $n$ increases, $\bar{\lambda}[n]$ decreases, thereby reducing the achievable uplink rate. Similarly, as pilot/data transmit power increases, the achievable sum-rate increases, but the interference power also increases with the transmit power, leading to a saturation of the sum-rate. Similar inferences can be drawn about the dependence of the sum-rate on other system parameters; these are illustrated in the next section, where we present numerical results. 

From \eqref{eq:maxratezf} and \eqref{eq:maxratemrc}, it is evident that the uplink rate per user on the $i$-th subcarrier increases with the subcarrier spacing $\Delta f$, because of the linear dependence of $\textrm {C}^{\textrm{mrc,ul}}_{\textrm{max}, ik}[n]$ and $\textrm {C}^{\textrm{zf,ul}}_{\textrm{max}, ik}[n]$ on $\Delta f$ and the inverse dependence of the ICI power on $\Delta f$. However, if the total bandwidth, $B$, is fixed, the total number of subcarriers will go down, i.e., the value of $N_{\mathcal{G}} = B/\Delta f$ in \eqref{eq:maxratemrc} and \eqref{eq:maxratezf} will be smaller. Deeper understanding the impact of increasing $\Delta f$ requires one to revisit the subcarrier and symbol (resource block) allocation scheme as the number of subcarriers is varied, and is an interesting direction for future work. In this paper, we  assume that $\Delta f$ and $B$ are given and fixed.

\section{Numerical Results}\label{sec:numericalresults}
We consider an AP equipped with  $N_{\mathcal{B}} = 256$ antennas that serves $N_{\mathcal{R}} = 2048$ single antenna users. In all, there are $N_{\mathcal{G}} = 512$ subcarriers spaced $\Delta f = 10\,\textrm{kHz}$ apart and spanning a total bandwidth $B = 5.12\,\textrm{MHz}$. The carrier center frequency is set at $f_{\textrm{c}} = 3\,\textrm{GHz}$. 
We employ path loss inversion based power control and set the effective transmit power $P_{\textrm{T}}$ at the UEs such that SNR at the AP is $10\,\textrm{dB}$. We assume the coherence bandwidth  to be $B_{\textrm{c}}=300\,\textrm{kHz}$ \cite{marzetta_book_2016}. Thus, there are $N_{\mathcal{H}}=30$ subcarriers in one coherence bandwidth. To observe the benefits of multiple access, the number of UEs served by a subcarrier is varied from $N_{\mathcal{U}} = 4$ to $N_{\mathcal{U}} = 128$ in powers of two such that the ratio $\frac{N_{\mathcal{R}}}{N_{\mathcal{U}}}$ is an integer. Correspondingly, the number of subcarriers assigned to a UE varies from $N_{\mathcal{C}} = 1$ to $N_{\mathcal{C}} = 32$, such that the ratio $\frac{N_{\mathcal{U}}}{N_{\mathcal{C}}}$ remains constant. We note that  $N_{\mathcal{U}} =  N_{\mathcal{R}}/N_{\mathcal{G}} = 4$ and $N_{\mathcal{C}} = 1$ corresponds to the allocation adopted in \cite{zhang_tcom_2017} where each subcarrier serves a non-overlapping set of 4 UEs while each UE transmits on a single subcarrier.

Figure \ref{fig:rate1} shows a plot of the achievable uplink rate of a UE as a function of the transmission index up to 30 transmissions for $V_{\textrm{max}} = 5$,  $25$ and $100 \, \textrm{m/s}$. The results for zero-forcing and maximal-ratio combining are shown in separate sub-figures. In both figures, we assume $N_{\mathcal{U}} = 8$ users per subcarrier. It can be observed that the simulated and the analytical curves match closely, thereby establishing the accuracy of our analytical uplink rate expressions. The uplink rate values for the ZF receiver for any given value of $V_{\textrm{max}}$ are significantly higher than those for the MRC receiver. This can be attributed to the 
presence of the additional $\frac{N_{\mathcal{U}}}{N_{\mathcal{C}}}\sigma _{\hat{\textrm {{h}}}}^{2} \, \bar {\lambda }\left [{n}\right]$  in the denominator of the SINR expression in \eqref{eq:maxratemrc}, and the fact that at $P_T = 10$ dB, the system is highly interference-limited. As a consequence, the interference suppression by the ZF receiver significantly improves the rate, in spite of the channel estimates getting outdated over time. The term $\frac{N_{\mathcal{U}}}{N_{\mathcal{C}}}\sigma _{\hat{\textrm {{h}}}}^{2} \, \bar {\lambda }\left [{n}\right]$ is also responsible for the relatively slower drop in the uplink rate for the MRC receiver compared to the ZF receiver.

\begin{figure}
    \centering
    \subfigure[ZF Receiver]
    {
        \includegraphics[width=3.0in]{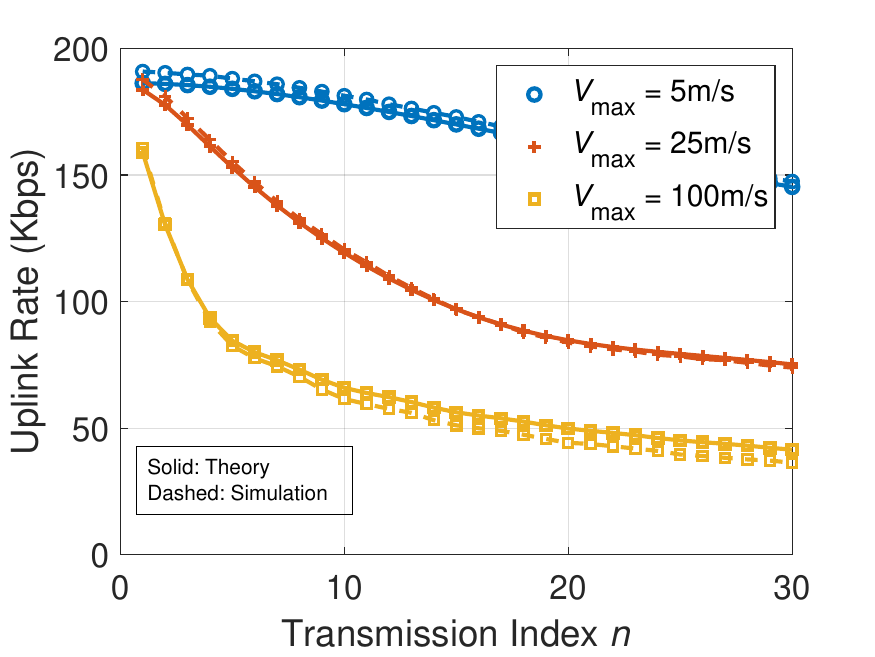}
        \label{fig:ratezf_Nu8}
    }    
    \subfigure[MRC Receiver]
    {
        \includegraphics[width=3.0in]{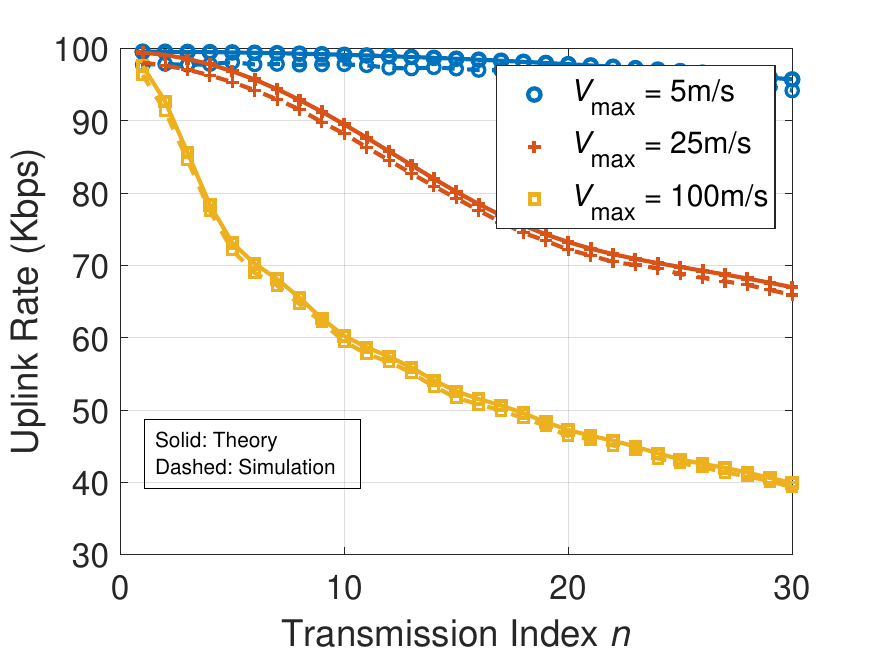}
        \label{fig:ratemrc_Nu8}
    }
    \caption{Uplink rate as a function of  transmission index for different maximum user velocities, for both ZF and MRC receivers. The ZF receiver significantly outperforms MRC at low transmission indices, where the channel estimates are reasonably up-to-date. For later transmission indices, and at high $V_{\text{max}}$, the performance of the two receivers become nearly equal.}
    \label{fig:rate1}
\end{figure}

Figure \ref{fig:rate2} shows the analytical plots of the achievable uplink rate of a UE with zero-forcing combining at the AP for all possible values of $N_{\mathcal{U}} < N_{\mathcal{B}}$. The sub-figures \ref{fig:ratezf_v5}, \ref{fig:ratezf_v25} and \ref{fig:ratezf_v100} correspond to $V_{\textrm{max}} = 5$,  $25$ and  $100\,\textrm{m/s}$, respectively.  It can be observed from the three plots that having each subcarrier serve more number of UEs generally results in higher uplink rate per UE. Depending on the value of $V_{\textrm{max}}$, such gains in UE performance may last up to $N_{\mathcal{U}} = 128$ (as shown for $V_{\textrm{max}} = 5\,\textrm{m/s}$ in figure \ref{fig:ratezf_v5}) or saturate at a lower $N_{\mathcal{U}}$ value (as shown in figure \ref{fig:ratezf_v25} and \ref{fig:ratezf_v100}). In particular, for $V_{\textrm{max}} = 100$ m/s, $N_{\mathcal{U}} = 64$ provides a better uplink rate compared to $N_{\mathcal{U}}=128$ beyond a transmission index of $5$, because the outdated channel estimate fails to suppress interference effectively. Thus, scheduling too many users per subcarrier is not advisable at high $V_{\max}$ when a ZF receiver is employed. On the other hand, for the MRC receiver, the uplink sum-rate monotonically increases with $N_{\mathcal{U}}$ for all three values of $V_{\textrm{max}}$ within the range of transmission indices considered, as depicted in Figure~\ref{fig:rate3}. Also, the uplink sum-rate achieved with the MRC receiver is significantly lower than that with the ZF receiver at lower transmission indices and maximum user velocities, while MRC outperforms ZF at higher transmission indices and maximum user velocities. This shows that the MRC receiver is more robust to channel aging than the ZF receiver under harsh channel conditions.

\begin{figure}
    \centering
    \subfigure[$V_{\textrm{max}} = 5\,\textrm{m/s}$]
    {
        \includegraphics[width=2.25in]{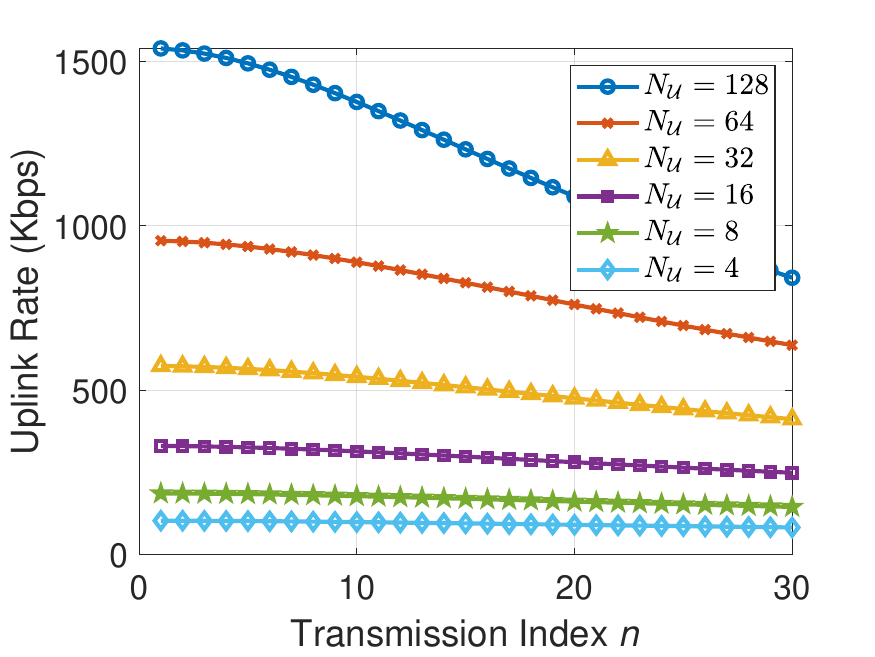}
        \label{fig:ratezf_v5}
    }\hspace{-0.8cm}
    \subfigure[$V_{\textrm{max}} = 25\,\textrm{m/s}$]
    {
        \includegraphics[width=2.2in]{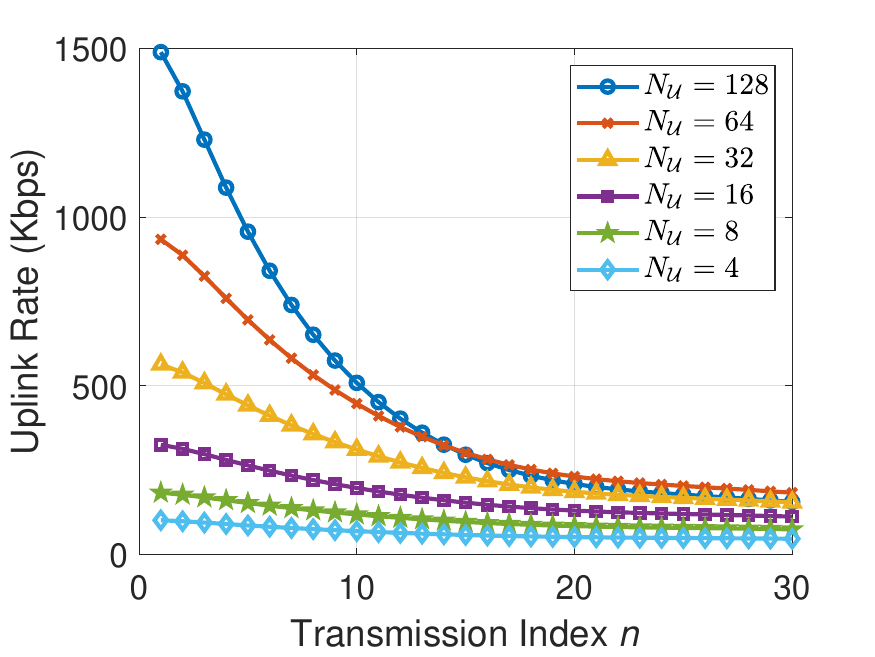}
        \label{fig:ratezf_v25}
    }\hspace{-0.8cm}
    \subfigure[$V_{\textrm{max}} = 100\,\textrm{m/s}$]
    {
        \includegraphics[width=2.2in]{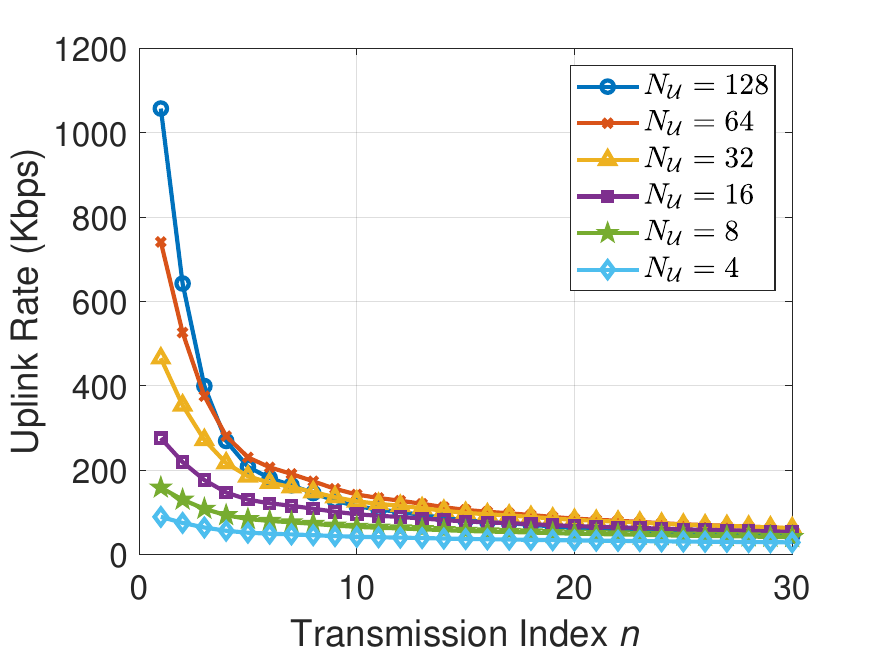}
        \label{fig:ratezf_v100}
    }
    \caption{Plot of uplink rate \emph{vs.} transmission index for ZF receiver across different values of $N_{\mathcal{U}}$. When $V_{\textrm{max}} = 5$~m/s, the uplink rate monotonously increases till $N_{\mathcal{U}} = 128$. However, at higher values of $V_{\textrm{max}}$, it is $N_{\mathcal{U}} = 64$ that consistently delivers the best per-UE performance throughout the transmission period.}
    \label{fig:rate2}
\end{figure}

\begin{figure}
    \centering
    \subfigure[$V_{\textrm{max}} = 5\,\textrm{m/s}$]
    {
        \includegraphics[width=2.25in]{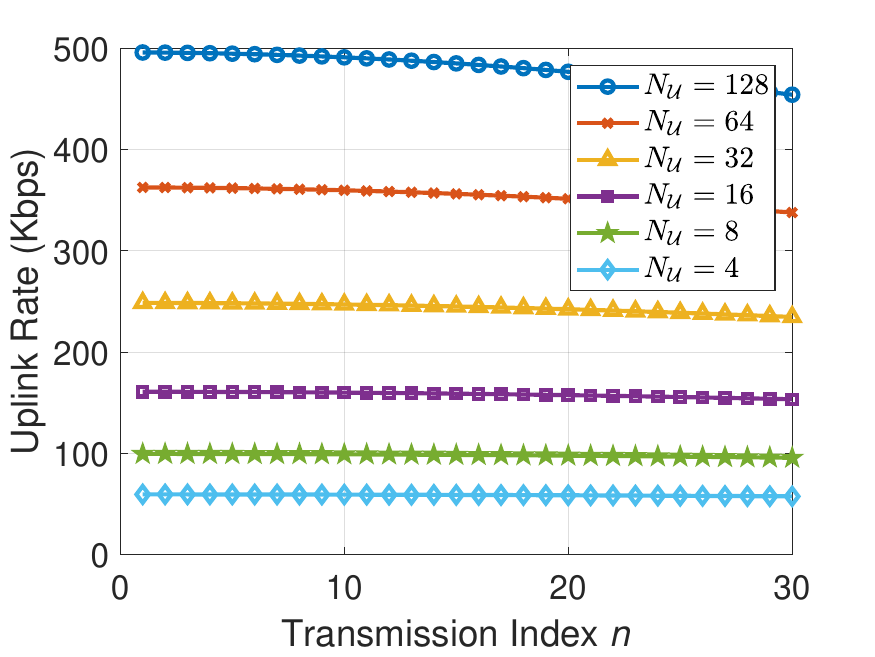}
        \label{fig:ratemrc_v5}
    }\hspace{-0.8cm}
    \subfigure[$V_{\textrm{max}} = 25\,\textrm{m/s}$]
    {
        \includegraphics[width=2.2in]{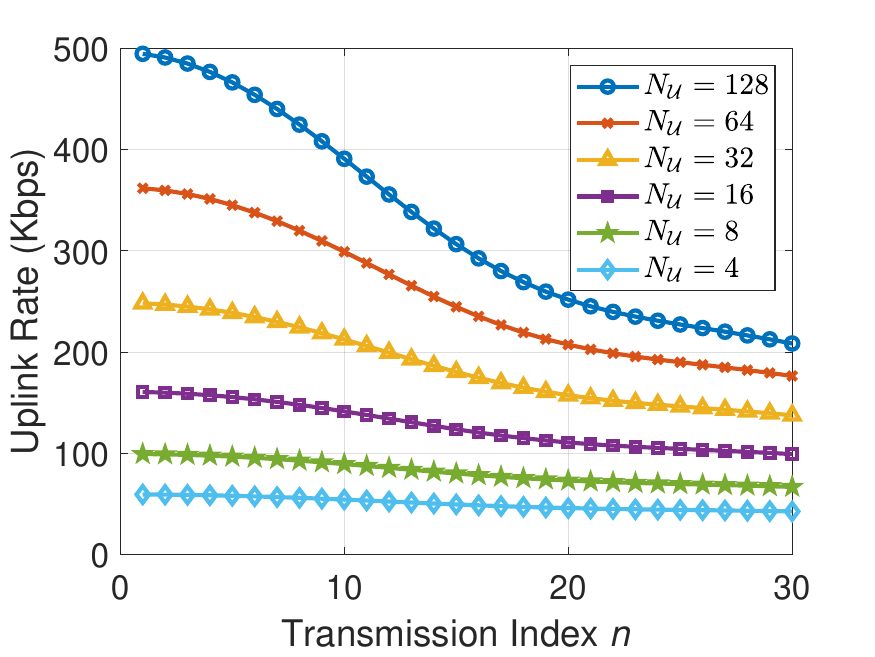}
        \label{fig:ratemrc_v25}
    }\hspace{-0.8cm}
    \subfigure[$V_{\textrm{max}} = 100\,\textrm{m/s}$]
    {
        \includegraphics[width=2.2in]{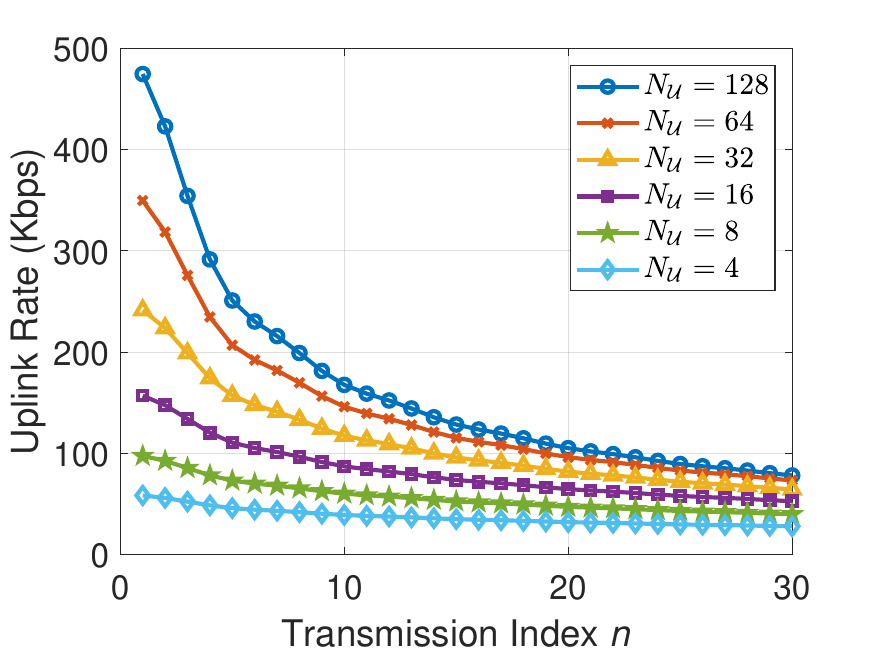}
        \label{fig:ratemrc_v100}
    }
    \caption{Plot of uplink rate \emph{vs.} transmission index for MRC receiver across different values of $N_{\mathcal{U}}$. Unlike ZF, with MRC receiver, the uplink rate attains its maximum value at $N_{\mathcal{U}} = 128$ for all three values of $V_{\textrm{max}}$.}
    \label{fig:rate3}
\end{figure}

Figure \ref{fig:sumrate1} demonstrates the achievable sum-rate performance of the cellular system at $N_{\mathcal{U}} = 8$ as a function of $V_{\textrm{max}}$ for three different values of the pilot percentage. As noted in \eqref{eq:maxsumratezf} and \eqref{eq:maxsumratemrc}, the system sum-rate is obtained by adding the rates of all individual users across $N_{\mathcal{D}}$ transmissions and dividing the result by the sum of the pilot and data lengths, i.e., $N_{\mathcal{P}} + N_{\mathcal{D}}$. The pilot percentage, defined as $\mu = \frac{N_{\mathcal{P}}}{N_{\mathcal{P}} + N_{\mathcal{D}}} \times 100$, is varied by keeping the length of the pilot sequence $N_{\mathcal{P}}$ fixed and changing the transmission length $N_{\mathcal{D}}$. We observe that the simulated curves corroborate the analytical plots well. With ZF combining at the AP, the system delivers the best sum-rate performance for $5 \leq V_{\textrm{max}}\, \leq 10\,\textrm{m/s}$ when $\mu$ equals 12.5\%. For $V_{\textrm{max}}$ beyond $10\,\textrm{m/s}$, $\mu = 25\%$ gives the best sum-rate performance. In other words, a shorter frame duration is better when the channel varies faster over time. With MRC at the AP, the behavior is similar except that the change in the optimal pilot percentage occurs at around $V_{\max} = 38\,\textrm{m/s}$. 

\begin{figure}
    \centering
    \subfigure[ZF Receiver]
    {
        \includegraphics[width=3.0in]{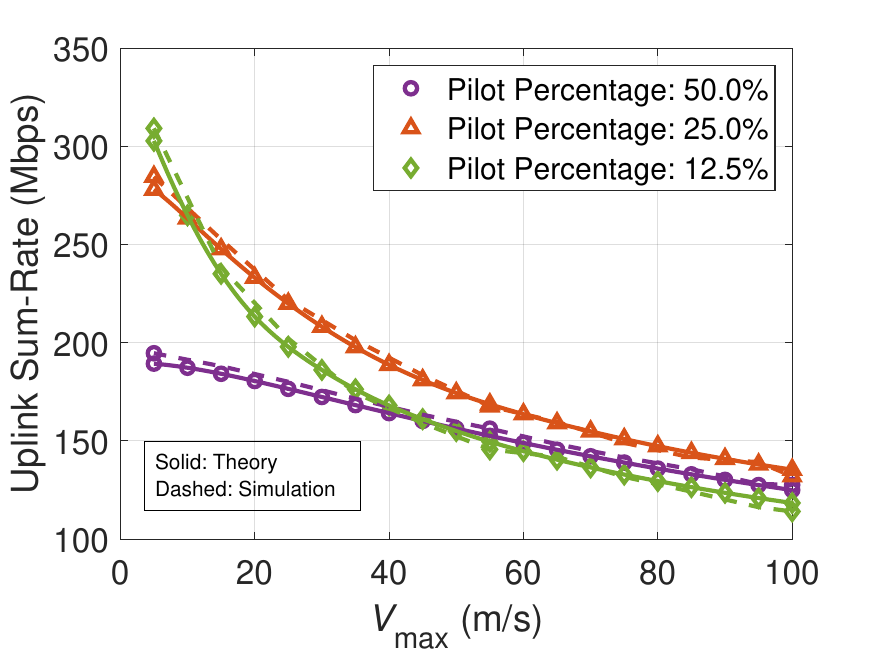}
        \label{fig:sumratezf_Nu8}
    }
    \subfigure[MRC Receiver]
    {
        \includegraphics[width=3.0in]{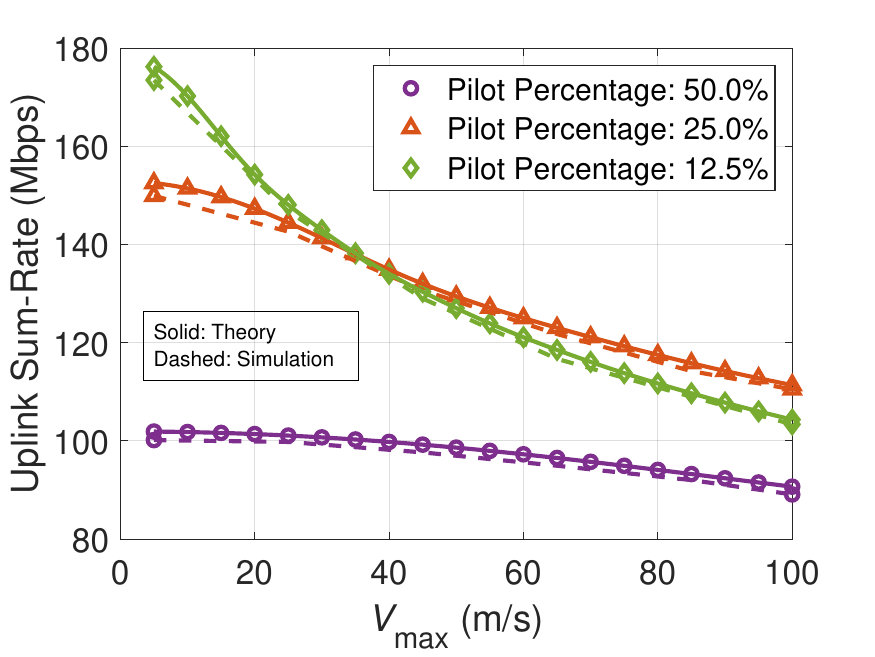}
        \label{fig:sumratemrc_Nu8}
    }
    \caption{Plots of system sum-rate \emph{vs.} $V_{\textrm{max}}$ for ZF and MRC receivers. To ensure maximum sum-rate performance, the system employing ZF receiver must switch the pilot percentage from 12.5\% to 25\% when $V_{\textrm{max}}$ crosses $10$~m/s. With MRC receiver, the change in the optimal pilot percentage should occur at around $V_{\textrm{max}} =  38$~ m/s.}
    \label{fig:sumrate1}
\end{figure}

Figure \ref{fig:sumrate2} shows the analytical plots of the sum-rate with ZF combining at the AP for the different possible values of $N_{\mathcal{U}}$. The sub-figures \ref{fig:sumratezf_pp125}, \ref{fig:sumratezf_pp25} and \ref{fig:sumratezf_pp50} correspond to $\mu = 12.5\%$, $\mu = 25\%$ and $\mu = 50\%$ respectively. It can be observed that increasing the number of UEs served by a subcarrier generally increases the sum-rate performance of the cellular system. However, such gains in the system sum-rate occur only up to $N_{\mathcal{U}} = 64$, after which, the sum-rate decreases. The behaviour is similar with MRC as can be seen in Figure~\ref{fig:sumrate3}.

\begin{figure}
    \centering
    \subfigure[Pilot percentage: $12.5\%$]
    {
        \includegraphics[width=2.25in]{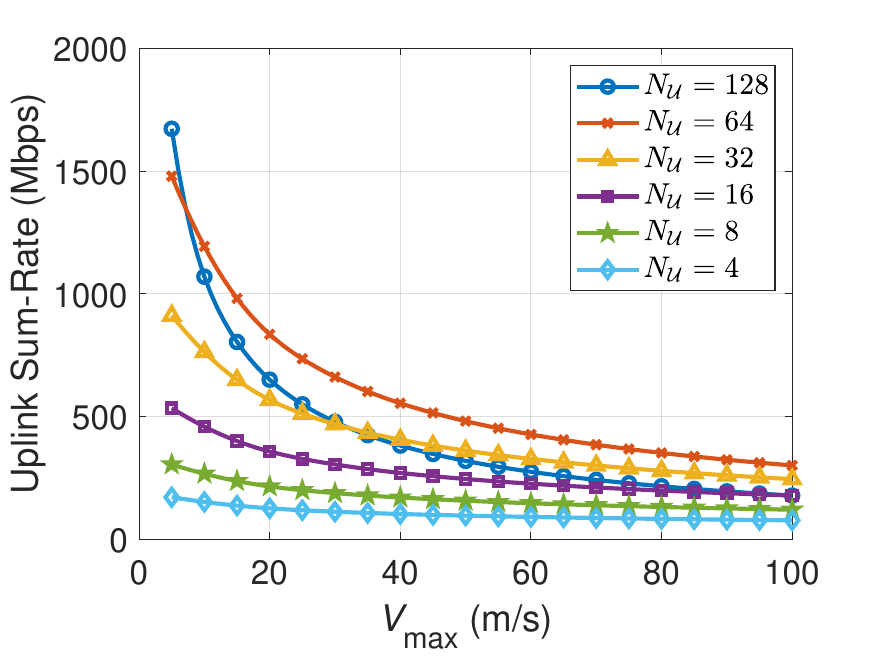}
        \label{fig:sumratezf_pp125}
    }\hspace{-0.8cm}
    \subfigure[Pilot percentage: $25.0\%$]
    {
        \includegraphics[width=2.2in]{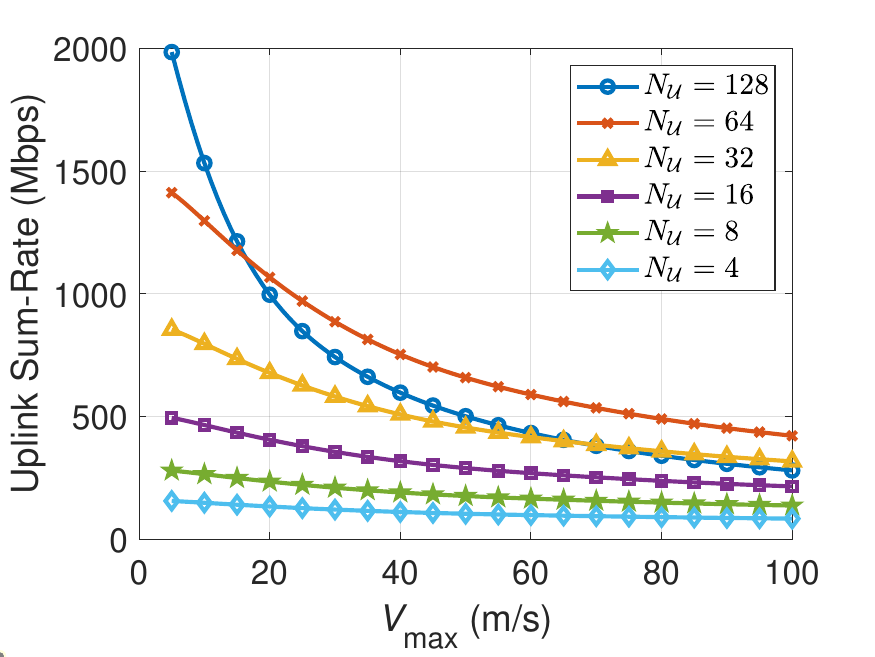}
        \label{fig:sumratezf_pp25}
    }\hspace{-0.8cm}
    \subfigure[Pilot percentage: $50.0\%$]
    {
        \includegraphics[width=2.2in]{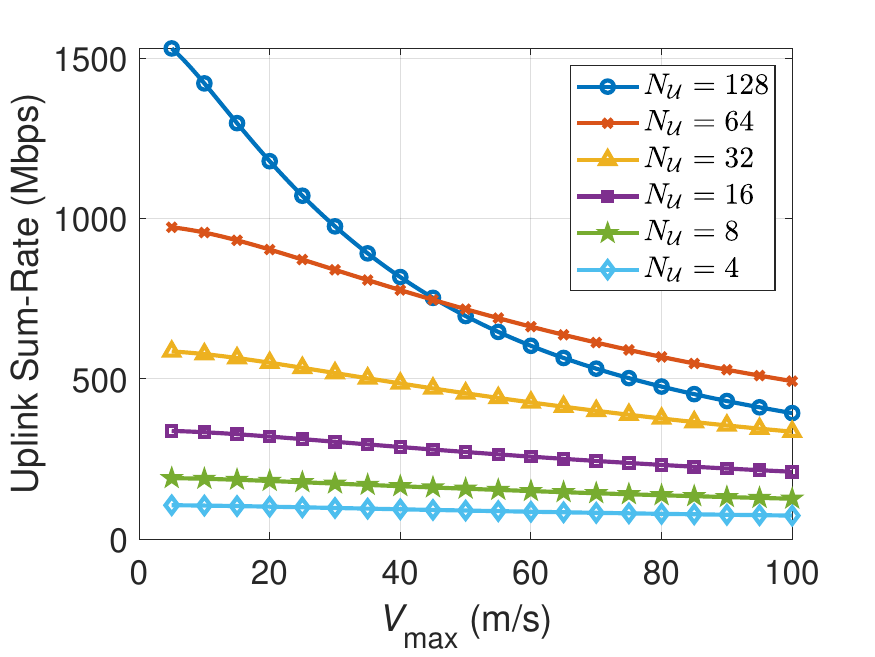}
        \label{fig:sumratezf_pp50}
    }
    \caption{Plots of system sum-rate \emph{vs.} $V_\textrm{max}$ for ZF receiver across different values of $N_{\mathcal{U}}$. Performance gains obtained by increasing $N_{\mathcal{U}}$ saturate at $N_{\mathcal{U}} = 64$.}
    \label{fig:sumrate2}
\end{figure}

\begin{figure}
    \centering
    \subfigure[Pilot percentage: $12.5\%$]
    {
        \includegraphics[width=2.2in]{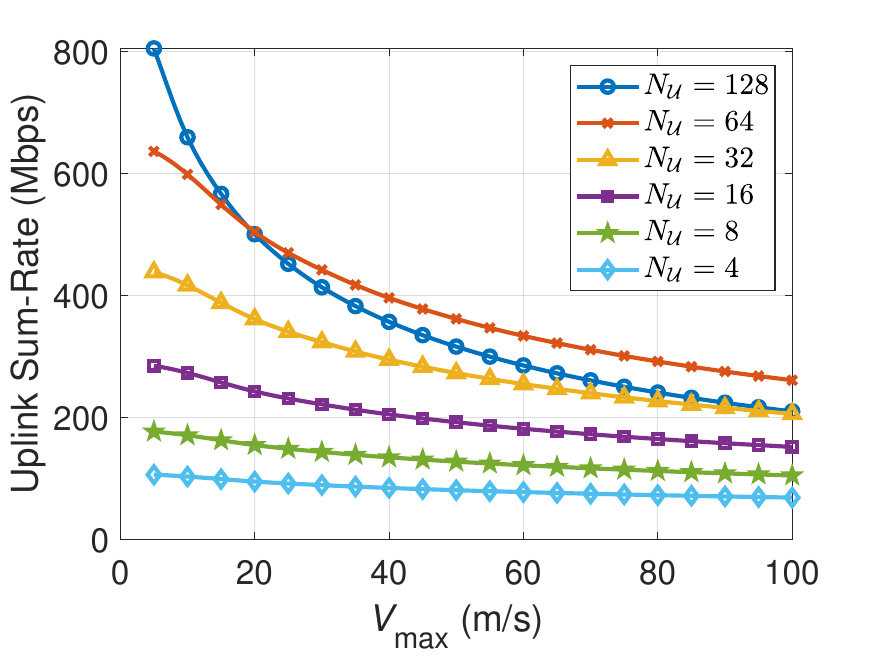}
        \label{fig:sumratemrc_pp125}
    }\hspace{-0.8cm}
    \subfigure[Pilot percentage: $25.0\%$]
    {
        \includegraphics[width=2.25in]{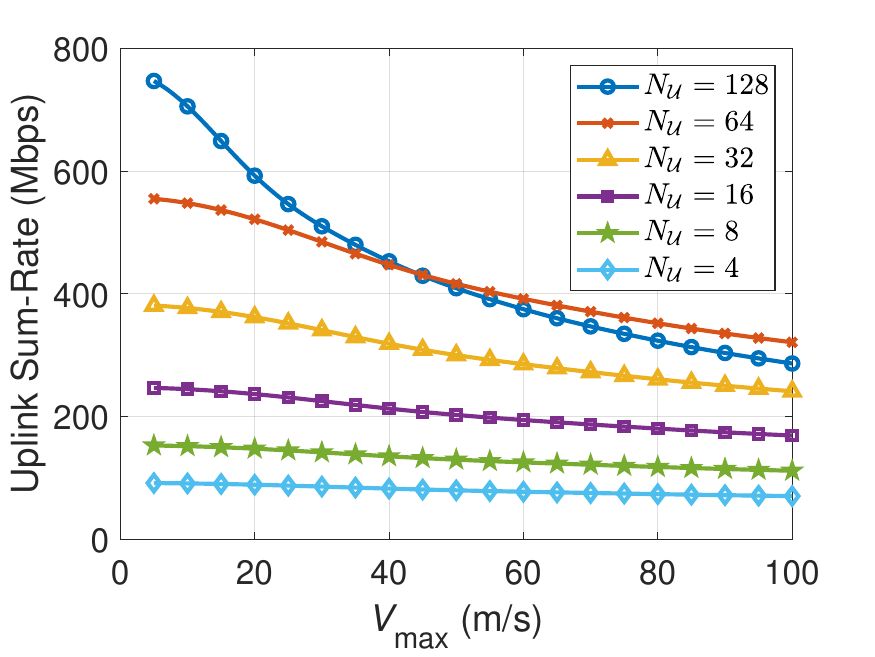}
        \label{fig:sumratemrc_pp25}
    }\hspace{-0.8cm}
    \subfigure[Pilot percentage: $50.0\%$]
    {
        \includegraphics[width=2.2in]{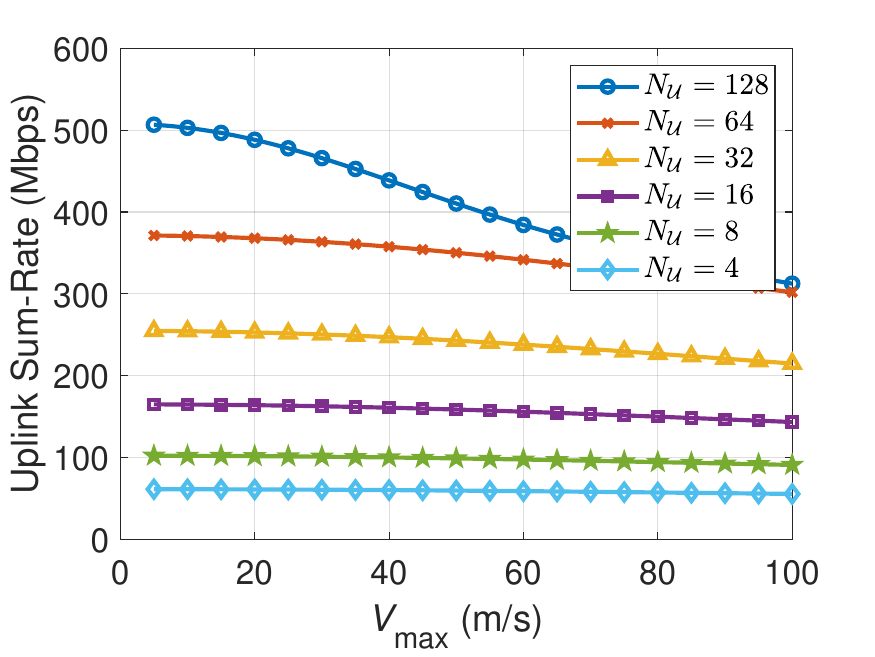}
        \label{fig:sumratemrc_pp50}
    }
    \caption{Plots of system sum-rate \emph{vs.} $V_\textrm{max}$ for MRC receiver across different values of $N_{\mathcal{U}}$. Similar to ZF, the MRC receiver delivers the best sum-rate performance at $N_{\mathcal{U}} = 64$.}
    \label{fig:sumrate3}
\end{figure}

Figure \ref{fig:wavy_sum_rate} shows plots of sum-rate as a function of the transmission length $N_{\mathcal{D}}$ for all possible values of $N_{\mathcal{U}}$, $V_{\textrm{max}} = 5, 25, 100\,\textrm{m/s}$, and both ZF and MRC receivers. From the figure, we can see that it is easy to determine the value of $N_{\mathcal{D}}$ which achieves the optimal trade-off between the pilot overhead and the effect of channel aging. As expected, the optimal frame duration is lower for higher $V_{\textrm{max}}$. Further, at $V_{\textrm{max}} = 100$~m/s, and with both ZF and MRC, the optimal uplink sum-rate with $N_{\mathcal{U}} = 64$ exceeds that with $N_{\mathcal{U}} = 128$. Thus, when the channel is very fast-varying, it is better to allot fewer users per subcarrier. In Figure \ref{fig:opt_sum_rate}, we plot the sum-rate optimized over $N_{\mathcal{D}}$ as a function of the received SNR for $N_{\mathcal{U}} = 16, 32, 64, 128$, $V_{\textrm{max}} = 5, 25, 100$~m/s, and both ZF and MRC receivers. At low SNR, the performance of ZF and MRC are close to each other. Since the MRC receiver is computationally simpler than the ZF, it is preferable at low SNRs. Also, while MRC outperforms ZF at low SNR when $N_{\mathcal{U}} = 128$, the performance obtained by choosing the value of $N_{\mathcal{U}}$ that yields the best sum-rate at each SNR point with ZF exceeds that obtained from MRC, even at low SNRs. Thus, interference suppression via ZF is useful even at low SNRs, when multiple mobile users are scheduled on each subcarrier.

\begin{figure}
    \centering
    \subfigure[$V_{\textrm{max}} = 5\,\textrm{m/s}$]
    {
        \includegraphics[width=2.2in]{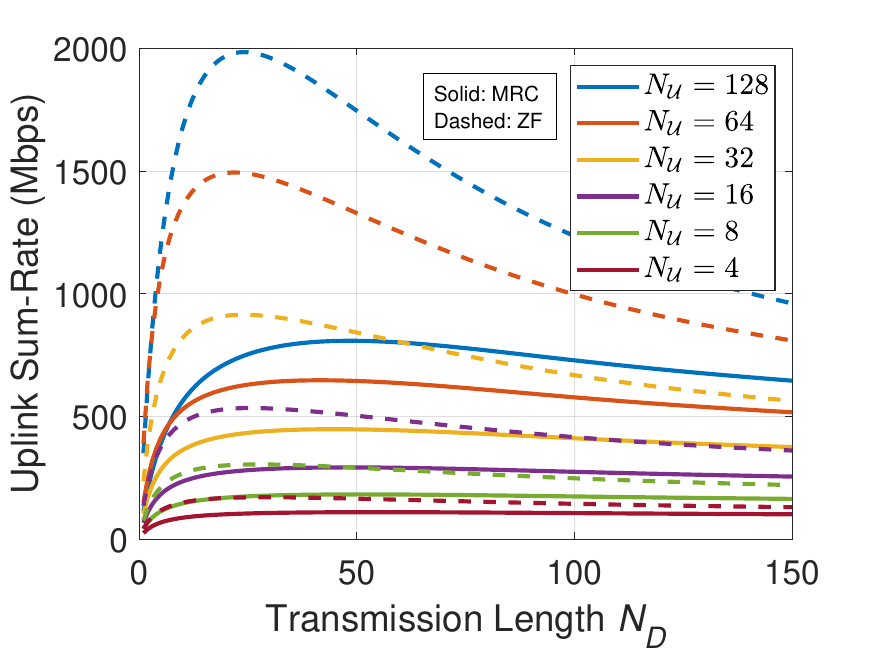}
        \label{fig:wavy_sum_rate_vmax5}
    }\hspace{-0.8cm}
    \subfigure[$V_{\textrm{max}} = 25\,\textrm{m/s}$]
    {
        \includegraphics[width=2.2in]{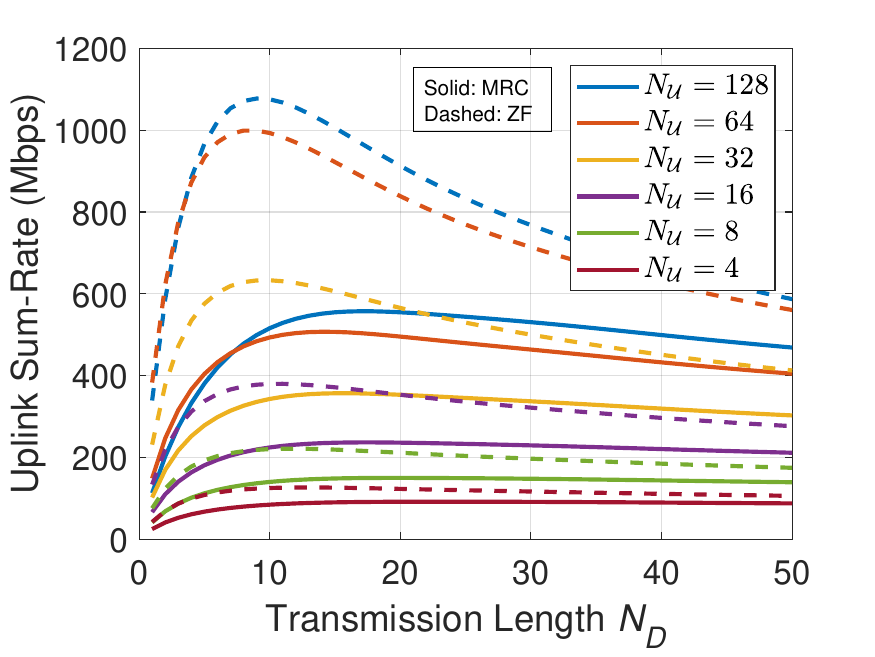}
        \label{wavy_sum_rate_vmax25}
    }\hspace{-0.8cm}
    \subfigure[$V_{\textrm{max}} = 100\,\textrm{m/s}$]
    {
        \includegraphics[width=2.25in]{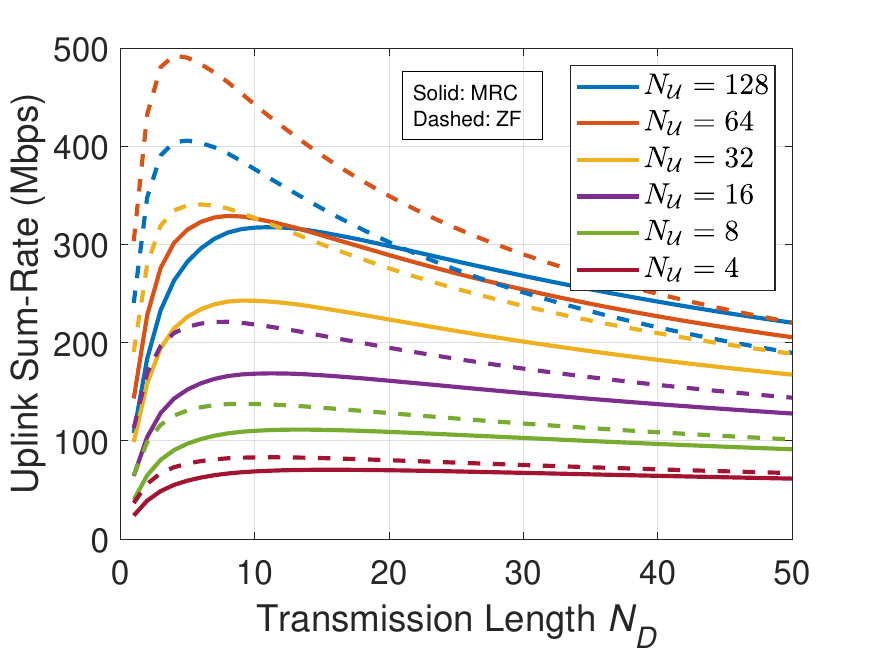}
        \label{fig:wavy_sum_rate_vmax100}
    }
    \caption{Plots of system sum-rate \emph{vs.} transmission length for both receivers. With $N_{\mathcal{D}}$ and $N_{\mathcal{U}}$ chosen optimally, ZF outperforms MRC in all cases, at an SNR of $10$~dB.}
    \label{fig:wavy_sum_rate}
\end{figure}

\begin{figure}
    \centering
    \subfigure[$V_{\textrm{max}} = 5\,\textrm{m/s}$]
    {
        \includegraphics[width=2.2in]{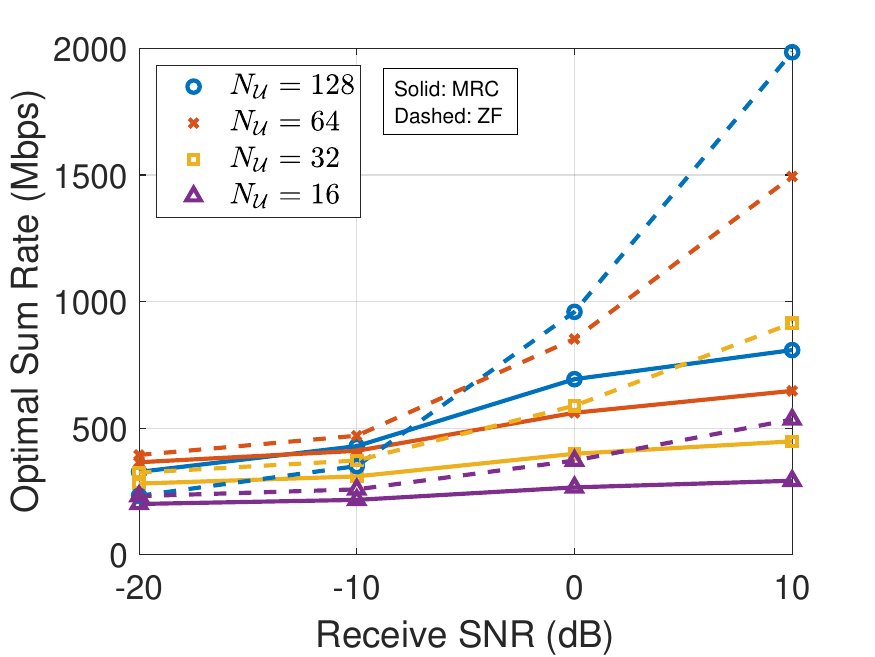}
        \label{fig:opt_sum_rate_vmax5}
    }\hspace{-0.8cm}
    \subfigure[$V_{\textrm{max}} = 25\,\textrm{m/s}$]
    {
        \includegraphics[width=2.2in]{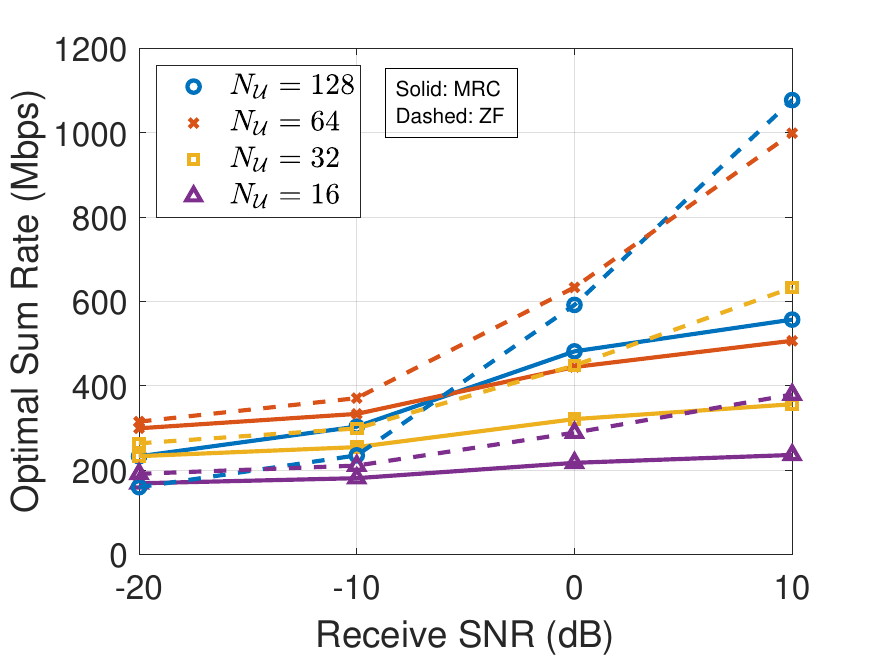}
        \label{fig:opt_sum_rate_vmax25}
    }\hspace{-0.8cm}
    \subfigure[$V_{\textrm{max}} = 100\,\textrm{m/s}$]
    {
        \includegraphics[width=2.25in]{OPT_SUM_RATE_VMAX5.pdf}
        \label{fig:opt_sum_rate_vmax100}
    }
    \caption{Plots of optimal system sum-rate \emph{vs.} receive SNR for ZF and MRC receiver across different values of $N_{\mathcal{U}}$. For a fixed value of $N_{\mathcal{U}}$, the performance of MRC and ZF are comparable at sufficiently low SNR. However, at higher SNR values or when the maximum rate obtainable across different values of $N_{\mathcal{U}}$ is considered, ZF outperforms MRC.}
    \label{fig:opt_sum_rate}
\end{figure}

\section{Conclusions}\label{sec:conclusion}

In this paper, we studied the uplink performance of a multi-user massive MIMO-OFDM cellular system when mobile users transmit data on multiple contiguous subcarriers. We derived an expression for the inter carrier interference power and showed that when the total number of subcarriers in the system is large, each subcarrier incurs a fixed amount of ICI power regardless of the UEs' power distribution across the subcarriers. 
We used the idea of frequency-domain channel coherence to present a pilot allocation scheme to reduce the training overhead involved in channel estimation. Then, we considered zero-forcing combining and maximal-ratio combining at the AP and derived  expressions for the uplink achievable sum-rate. The simple, closed-form expressions provided interesting insights into the core trade-offs involved in MU-mMIMO-OFDM systems in the presence of channel aging. For both receivers, we showed that having a subcarrier serve more number of UEs generally results in higher system sum-rate values, especially for low user mobility. However, such gains in sum-rate performance saturate at a point, beyond which assigning more users to a subcarrier does not provide any further improvement.

In this work, when using the notion of coherence bandwidth, we have assumed that the channel is constant across a set of subcarriers. In practice, the channel coherence bandwidth is usually defined as a set of contiguous subcarriers over which the cross-correlation in the channels across subcarriers is sufficiently high, say $0.7$. Within such a set of subcarriers, the channel will not be constant, but can be closely approximated using a deterministic function, e.g., via linear interpolation across subcarriers. In order to perform linear interpolation, however, we need channel estimates on at least two subcarriers within the coherence bandwidth. With a little bookkeeping effort, the framework in our work can be extended to this scenario also - instead of having each user transmit pilot symbols on one subcarrier within the coherence bandwidth, we can allot \emph{two} subcarriers to each user for pilot transmission. Future work can also consider extending the analysis in this paper to multi-cell systems, accounting for inter-cell interference and pilot contamination. One can also develop channel prediction methods that alleviate the effect of channel aging in OFDM systems while accounting for ICI.

\appendices

\section{Proof of Theorem 1} \label{app:proof_theorem_1}

Equation \eqref{eq:rxzf3} can be rewritten in terms of the normalized channel estimate  $\hat{\mathbf{Z}}_{i} = \sqrt{\frac{1}{\sigma _{\hat{\textrm {{h}}}}^{2}}}\hat{\mathbf{H}}_{i}$ as
\begin{align}
    \mathbf {y}_{i}\left [{n}\right] =& \, \sqrt {P_{\textrm {T}}\,\sigma _{\hat{\textrm {{h}}}}^{4}}\left(\hat{\mathbf{Z}}_{i}\right)^{\dagger} \hat{\mathbf{Z}}_{i} \boldsymbol {\Lambda }_{i}\left [{n}\right ]\mathbf{D}_{\eta_{i}}^{1/2}\mathbf {x}_{i}\left [{n}\right] - 
    \sqrt {P_{\textrm {T}}\,\sigma _{\hat{\textrm {{h}}}}^{2}} \left (\hat{\mathbf{Z}}_{i}\right)^{\dagger} \mathbf {G}_{i}^{\mathcal {P}} \boldsymbol {\Lambda }_{i}\left [{n}\right ]\mathbf{D}_{\eta_{i}}^{1/2}\mathbf {x}_{i}\left [{n}\right ] \nonumber \\ \quad& +
    \, \sqrt {P_{\textrm {T}}\,\sigma _{\hat{\textrm {{h}}}}^{2}} \left (\hat{\mathbf{Z}}_{i}\right )^{\dagger} \mathbf {G}_{i}^{\mathcal {D}}\left [{n}\right ]\mathbf{D}_{\eta_{i}}^{1/2}\mathbf {x}_{i}\left [{n}\right ] +
    \sqrt{\sigma _{\hat{\textrm {{h}}}}^{2}}\,\left (\hat{\mathbf{Z}}_{i}\right )^{\dagger} \left(\mathbf {u}_{i}\left [{n}\right ] + \mathbf {n}_{i}\left [{n}\right ]\right) \label{eq:rxzf4} \\
    =& \, \sqrt {P_{\textrm {T}}\,\sigma _{\hat{\textrm {{h}}}}^{4}} \boldsymbol {\Lambda }_{i}\left [{n}\right ]\mathbf{D}_{\eta_{i}}^{1/2}\mathbf {x}_{i}\left [{n}\right] - 
    \sqrt {P_{\textrm {T}}\,\sigma _{\hat{\textrm {{h}}}}^{2}} \left (\hat{\mathbf{Z}}_{i}\right)^{\dagger} \mathbf {G}_{i}^{\mathcal {P}} \boldsymbol {\Lambda }_{i}\left [{n}\right ]\mathbf{D}_{\eta_{i}}^{1/2}\mathbf {x}_{i}\left [{n}\right ] \nonumber \\ \quad& +
    \, \sqrt {P_{\textrm {T}}\,\sigma _{\hat{\textrm {{h}}}}^{2}} \left (\hat{\mathbf{Z}}_{i}\right )^{\dagger} \mathbf {G}_{i}^{\mathcal {D}}\left [{n}\right ]\mathbf{D}_{\eta_{i}}^{1/2}\mathbf {x}_{i}\left [{n}\right ] +
    \sqrt{\sigma _{\hat{\textrm {{h}}}}^{2}}\,\left (\hat{\mathbf{Z}}_{i}\right )^{\dagger} \left(\mathbf {u}_{i}\left [{n}\right ] + \mathbf {n}_{i}\left [{n}\right ]\right), \label{eq:rxzf5}
\end{align}
where $\sigma_{\hat{\textrm{h}}}^{2}$ denotes the variance of the entries in $\hat{\mathbf{H}}_{i}$. The interference and noise terms, conditioned on the normalized channel estimate $\hat{\mathbf{Z}}_{i}$, are uncorrelated with the desired signal term. Using the approximation in \cite[Lemma 2]{zhang_tcom_2017}, the covariance matrices of the signal terms can be derived. The covariance of the interference due to channel estimation error is given by
\begin{multline}\label{eq:covzf1}
    \textrm{Cov}\Bigg\{\left. \sqrt{P_{\textrm {T}}\,\sigma _{\hat{\textrm {{h}}}}^{2}} \left (\hat{\mathbf{Z}}_{i}\right)^{\dagger} \mathbf {G}_{i}^{\mathcal {P}}  \boldsymbol {\Lambda }_{i}\left [{n}\right ]\mathbf{D}_{\eta_{i}}^{1/2}\mathbf {x}_{i}\left [{n}\right] \right\vert  \hat{\mathbf{Z}}_{i}\Bigg\} \\
    = P_{\textrm{T}}\,\sigma _{\hat{\textrm {{h}}}}^{2}\,\bar {\lambda }\left [{n}\right ] \sum_{k=1}^{N_{\mathcal{U}}} \eta_{ik} \left(\frac{N_{\mathcal{V}}  N_{\mathcal{U}}}{N_{\mathcal{P}} N_{\mathcal{C}}} \sigma_{\textrm{u}}^{2} + \frac{N_{\mathcal{V}} }{N_{\mathcal{P}}P_{\textrm{T}}} \sigma_{\textrm{n}}^{2}\right) \left({\left ({\hat{\mathbf{Z}}_{i}}\right )^{H}\hat{\mathbf{Z}}_{i}}\right )^{-1}.
\end{multline}
The covariance of the interference due to channel aging in \eqref{eq:rxzf5} is given by
\begin{equation}\label{eq:covzf2}
    \textrm{Cov}\Bigg\{\left. \sqrt{P_{\textrm {T}}\,\sigma _{\hat{\textrm {{h}}}}^{2}} \left (\hat{\mathbf{Z}}_{i}\right)^{\dagger} \mathbf {G}_{i}^{\mathcal {D}}\left[n\right]  \mathbf{D}_{\eta_{i}}^{1/2}\mathbf {x}_{i}\left [{n}\right] \right\vert  \hat{\mathbf{Z}}_{i}\Bigg\} =  P_{\textrm{T}}\,\sigma _{\hat{\textrm {{h}}}}^{2}\, \sum_{k=1}^{N_{\mathcal{U}}} \eta_{ik} \, \sigma_{\textrm{h}}^{2} \left(1 - \bar {\lambda }\left [{n}\right ] \right) \left ({\left ({\hat{\mathbf{Z}}_{i}}\right )^{H}\hat{\mathbf{Z}}_{i}}\right )^{-1}.
\end{equation} 
The covariance of the ICI and AWGN term in \eqref{eq:rxzf5} is given by
\begin{equation}\label{eq:covzf3}
    \textrm{Cov}\Bigg\{\left. \sqrt{\sigma _{\hat{\textrm {{h}}}}^{2}}\,\left (\hat{\mathbf{Z}}_{i}\right )^{\dagger} \left(\mathbf {u}_{i}\left [{n}\right ] + \mathbf {n}_{i}\left [{n}\right ]\right) \right\vert  \hat{\mathbf{Z}}_{i}\Bigg\} = \sigma _{\hat{\textrm {{h}}}}^{2}\,\left(\frac{N_{\mathcal{U}} P_{\textrm {T}}}{N_{\mathcal{C}}} \sigma_{\textrm{u}}^{2} + \sigma_{\textrm{n}}^{2}\right) \left ({\left ({\hat{\mathbf{Z}}_{i}}\right )^{H}\hat{\mathbf{Z}}_{i}}\right )^{-1}.
\end{equation}
Finally, the covariance of the desired signal term in \eqref{eq:rxzf5} is given by
\begin{equation}\label{eq:covzf4}
    \textrm{Cov}\Bigg\{\left. \sqrt {P_{\textrm {T}}\,\sigma _{\hat{\textrm {{h}}}}^{4}} \boldsymbol {\Lambda }_{i}\left [{n}\right ]\mathbf{D}_{\eta_{i}}^{1/2}\mathbf {x}_{i}\left [{n}\right]\right\vert \hat{\mathbf{Z}}_{i}\Bigg\} = P_{\textrm {T}}\,\sigma _{\hat{\textrm {{h}}}}^{4}\,\mathbf{D}_{\eta_{i}} \bar {\lambda }\left [{n}\right ] \mathbf {I}_{N_{\mathcal {U}}}.
\end{equation}
We note that the covariances in \eqref{eq:covzf2}, \eqref{eq:covzf3} and \eqref{eq:covzf4} correspond to noise and interference terms that are mutually uncorrelated with each other.

Now, the uplink SINR of UE$_{ik}$'s $n$-th transmission can be found by extracting the $(k,k)$-th element of the covariance matrices in \eqref{eq:covzf1}, \eqref{eq:covzf2}, \eqref{eq:covzf3} and \eqref{eq:covzf4}, respectively, and dividing the variance of the desired signal term by the sum of the variances of the interference plus noise terms. Thus, we get
\begin{equation}
    \textrm{SINR}_{ik}^{zf, ul}[n] = \frac{\eta_{ik} \sigma _{\hat{\textrm {h}}}^{2} \bar {\lambda }\left [{n}\right ]}{\left(\frac{N_{\mathcal {U}}\sigma _{\textrm {u}}^{2}}{N_{\mathcal{C}}} + \frac{\sigma _{\textrm {n}}^{2}}{P_{\textrm {T}}}\right)\left(1+\frac{N_{\mathcal{V}}  \bar{\eta}_{i}}{N_{\mathcal{P}}} \bar {\lambda }\left [{n}\right ]\right) + \bar{\eta}_{i} \sigma _{\textrm {h}}^{2}\left ({1-\bar {\lambda }\left [{n}\right ]}\right)\left ({\left ({\hat{\mathbf{Z}}_{i}}\right )^{H}\hat{\mathbf{Z}}_{i}}\right )^{-1}_{kk}}.
\end{equation}
Here, $\bar{\eta}_{i} = \sum_{k=1}^{N_{\mathcal{U}}} \eta_{ik}$,  ${N_{\mathcal{V}}  = \lceil{N_{\mathcal{C}}/N_{\mathcal{H}}}\rceil}$; $\bar {\lambda }\left [{n}\right ]=\frac {1}{V_{\textrm {max}}}\int _{0}^{V_{\textrm {max}}}J_{0}^{2}\left({\frac {2\pi vf_{\textrm{c}}nT_{\textrm {s}}}{\textrm {c}}}\right)dv$ denotes the expectation of the diagonal entries in $\left ({\boldsymbol {\Lambda }_{i}\left [{n}\right ]}\right )^{2}$ and $\sigma_{\hat{\textrm{h}}}^{2} = \sigma_{\textrm{h}}^{2} + \frac{N_{\mathcal{V}}  N_{\mathcal{U}}}{N_{\mathcal{P}} N_{\mathcal{C}}} \sigma_{\textrm{u}}^{2} + \frac{N_{\mathcal{V}} }{N_{\mathcal{P}}P_{\textrm{T}}} \sigma_{\textrm{n}}^{2}$ denotes the variance of the entries in $\hat{\mathbf{H}}_{i}$.

Now, the achievable uplink rate of UE$_{ik}$'s $n$-th transmission on the $i$-th subcarrier can be computed as $\textrm{C}^{\textrm{zf, ul}}_{ik}[n] \geq \mathbb{E}\{\Delta f \log_{2}\left(1 + \textrm{SINR}_{ik}^{\textrm{zf, ul}}[n]\right)\}$\cite{marzetta_book_2016}, where the expectation is with respect to the term $\frac{1}{\left(\left ({\left ({\hat{\mathbf{Z}}}_{i}\right )^{H}\hat{\mathbf{Z}}}_{i}\right )^{-1}\right)_{kk}}$. The expectation operator can be taken inside the logarithm by virtue of Jensen's inequality. Then, using the observation that  $\mathbb{E}\Bigg\{\frac{1}{\left(\left ({\left ({\hat{\mathbf{Z}}}_{i}\right )^{H}\hat{\mathbf{Z}}}_{i}\right )^{-1}\right)_{kk}}\Bigg\} = N_{\mathcal{B}} - N_{\mathcal{U}} + 1,  k = 1, \dots, N_{\mathcal{U}}$ from \cite{marzetta_book_2016}, we obtain~\eqref{eq:ratezf}.

\section{Proof of Theorem 2} \label{app:proof_theorem_2}
Equation \eqref{eq:yxmrc1} can be rewritten in terms of the normalized channel estimate $\hat{\mathbf{Z}}_{i}$ as 
\begin{equation}\label{eq:yxmrc2}
\begin{aligned}
    \mathbf {y}_{i}\left [{n}\right] = \sqrt{P_{\textrm {T}}\,\sigma _{\hat{\textrm {{h}}}}^{4}}\,\left(\hat{\mathbf{Z}}_{i}\right)^{H} \hat{\mathbf{Z}}_{i} \boldsymbol {\Lambda }_{i}\left [{n}\right ]\mathbf{D}_{\eta_{i}}^{1/2}\mathbf {x}_{i}\left [{n}\right ] - 
    \sqrt {P_{\textrm {T}}\,\sigma _{\hat{\textrm {{h}}}}^{2}}\,\left (\hat{\mathbf{Z}}_{i}\right)^{H} \mathbf {G}_{i}^{\mathcal {P}} \boldsymbol {\Lambda }_{i}\left [{n}\right ]\mathbf{D}_{\eta_{i}}^{1/2}\mathbf {x}_{i}\left [{n}\right ] \\ +
    \, \sqrt {P_{\textrm {T}}\,\sigma _{\hat{\textrm {{h}}}}^{2}}\,\left (\hat{\mathbf{Z}}_{i}\right )^{H} \mathbf {G}_{i}^{\mathcal {D}}\left [{n}\right ]\mathbf{D}_{\eta_{i}}^{1/2}\mathbf {x}_{i}\left [{n}\right ] +
    \sqrt{\sigma_{\hat{\textrm {{h}}}}^{2}} \, \left (\hat{\mathbf{Z}}_{i}\right)^{H} \left(\mathbf {u}_{i}\left [{n}\right ] + \mathbf {n}_{i}\left [{n}\right ]\right).
\end{aligned}
\end{equation}
Now, the $k$-th element in $\mathbf {y}_{i}\left [{n}\right]$ is 
\begin{equation}\label{eq:yxmrc3}
\begin{aligned}
    \Big\{\mathbf{y}_{i}[n]\Big\}_{k} = \sqrt{P_{\textrm {T}}\,\sigma _{\hat{\textrm {{h}}}}^{4}}\,\hat{\mathbf{z}}_{ik}^{H} \,\hat{\mathbf{Z}}_{i} \boldsymbol {\Lambda }_{i}\left [{n}\right ]\mathbf{D}_{\eta_{i}}^{1/2}\mathbf {x}_{i}\left [{n}\right ] - 
    \sqrt {P_{\textrm {T}}\,\sigma _{\hat{\textrm {{h}}}}^{2}}\, \hat{\mathbf{z}}_{ik}^{H} \, \mathbf {G}_{i}^{\mathcal {P}} \boldsymbol {\Lambda }_{i}\left [{n}\right ]\mathbf{D}_{\eta_{i}}^{1/2}\mathbf {x}_{i}\left [{n}\right ] \\ +
    \, \sqrt {P_{\textrm {T}} \, \sigma_{\hat{\textrm {{h}}}}^{2}} \, \hat{\mathbf{z}}_{ik}^{H} \mathbf {G}_{i}^{\mathcal {D}}\left [{n}\right ]\mathbf{D}_{\eta_{i}}^{1/2}\mathbf {x}_{i}\left [{n}\right ] +
    \sqrt{\sigma_{\hat{\textrm {{h}}}}^{2}} \, \hat{\mathbf{z}}_{ik}^{H} \left(\mathbf {u}_{i}\left [{n}\right ] + \mathbf {n}_{i}\left [{n}\right ]\right),
\end{aligned}
\end{equation}
where $\hat{\mathbf{z}}_{ik}$ denotes the $k$-th column of $\hat{\mathbf{Z}}_{i}$. To obtain an expression for the achievable uplink SINR, we employ the \textit{use and then forget CSI} approach from \cite{marzetta_book_2016} in which a first party performs MR combining using knowledge of the channel estimate and passes the signal to another party that processes the signal based on the expected value of the equivalent channel. We invoke this approach by rewriting \eqref{eq:yxmrc3} as 
\begin{align}\label{eq:unfcsi}
    \frac{1}{\sqrt{N_{\mathcal{B}}}} \Big\{\mathbf{y}_{i}[n]\Big\}_{k} =& \, \sqrt{\frac{P_{\textrm{T}}\,\sigma _{\hat{\textrm {{h}}}}^{4} \rho_{ik}^{2}[n] \eta_{ik}}{N_{\mathcal{B}}}} \, \mathbb{E}\Big\{||\hat{\mathbf{z}}_{ik}||^{2}\Big\} \, x_{ik}[n] \nonumber \\
    +& \,\frac{1}{\sqrt{N_{\mathcal{B}}}} \left(\sqrt{\sigma_{\hat{\textrm {{h}}}}^{2}} \,\hat{\mathbf{z}}_{ik}^{H} \, \mathbf{u}_{i}[n] + \sqrt{\sigma_{\hat{\textrm {{h}}}}^{2}} \, \hat{\mathbf{z}}_{ik}^{H} \, \mathbf{n}_{i}[n]\right) \nonumber \\
    +& \,\frac{1}{\sqrt{N_{\mathcal{B}}}} \left(\sqrt{P_{\textrm {T}}\,\sigma_{\hat{\textrm {{h}}}}^{2}} \, \hat{\mathbf{z}}_{ik}^{H} \,\mathbf {G}_{i}^{\mathcal {D}}\left [{n}\right ]\mathbf{D}_{\eta_{i}}^{1/2}\mathbf {x}_{i}\left [{n}\right]- \sqrt{P_{\textrm {T}}\,\sigma_{\hat{\textrm {{h}}}}^{2}} \, \hat{\mathbf{z}}_{ik}^{H}\,\mathbf {G}_{i}^{\mathcal{P}}\boldsymbol {\Lambda }_{i}\left [{n}\right ]\mathbf{D}_{\eta_{i}}^{1/2}\mathbf {x}_{i}\left [{n}\right]\right) \nonumber \\
    +& \, \frac{1}{\sqrt{N_{\mathcal{B}}}} \left(\hat{\mathbf{z}}_{ik}^{H} \, \sum_{\substack{k'=1 \\ k' \neq k}}^{N_{\mathcal{U}}}  \sqrt{P_{\textrm {T}}\,\sigma _{\hat{\textrm {{h}}}}^{4}\,\rho_{ik}^{2}[n]\, \eta_{ik'}} \, \hat{\mathbf{z}}_{ik'} x_{ik'}\left [{n}\right ]\right) \nonumber \\
    +& \, \sqrt{\frac{P_{\textrm{T}}\,\sigma _{\hat{\textrm {{h}}}}^{4} \rho_{ik}^{2}[n] \eta_{ik}}{N_{\mathcal{B}}}} \left(||\hat{\mathbf{z}}_{ik}||^{2} - \mathbb{E}\Big\{||\hat{\mathbf{z}}_{ik}||^{2}\Big\}\right)x_{ik}[n].
\end{align}
The mean-square value of the first term in the above expression, which represents the desired signal, is
\begin{equation}\label{eq:covmrc1}
        \frac{P_{\textrm{T}}\,\sigma _{\hat{\textrm {{h}}}}^{4} \bar {\lambda }\left [{n}\right] \eta_{ik}}{N_{\mathcal{B}}} \left(\mathbb{E}\Big\{||\hat{\mathbf{z}}_{ik}||^{2}\Big\}\right)^{2} = N_{\mathcal{B}}P_{\textrm{T}}\,\sigma _{\hat{\textrm {{h}}}}^{4} \bar {\lambda }\left [{n}\right] \eta_{ik}.
\end{equation}
The second and third terms in \eqref{eq:unfcsi} contain interference from ICI, AWGN, channel aging and channel estimation error, and have variance given by
\begin{align}\label{eq:covmrc2}
        \sigma_{\hat{\textrm{{h}}}}^{2}\sigma_{\textrm{u}}^{2}\frac{N_{\mathcal{U}}P_{\textrm{T}}}{N_{\mathcal{C}}} + \sigma_{\hat{\textrm{{h}}}}^{2} \sigma_{\textrm{n}}^{2} + P_{\textrm{T}}\sigma_{\hat{\textrm{{h}}}}^{2}\sigma_{\textrm{h}}^{2}\left(1 - \bar {\lambda }\left [{n}\right]\right)\sum_{k'=1}^{N_{\mathcal{U}}}\eta_{ik'} + \, P_{\textrm{T}}\sigma_{\hat{\textrm{{h}}}}^{2} \bar {\lambda }\left [{n}\right]\left(\frac{N_{\mathcal{V}}  N_{\mathcal{U}}}{N_{\mathcal{P}} N_{\mathcal{C}}} \sigma_{\textrm{u}}^{2} + \frac{N_{\mathcal{V}} }{N_{\mathcal{P}}P_{\textrm{T}}} \sigma_{\textrm{n}}^{2}\right)\sum_{k'=1}^{N_{\mathcal{U}}}\eta_{ik'}.
\end{align}    
The fourth term in \eqref{eq:unfcsi} represents channel non-orthogonality and has variance given by
\begin{align}
        \frac{1}{N_{\mathcal{B}}}\textrm{Var}\Bigg\{\hat{\mathbf{z}}_{ik}^{H} \, \sum_{\substack{k'=1 \\ k' \neq k}}^{N_{\mathcal{U}}}  \sqrt{P_{\textrm {T}}\,\sigma _{\hat{\textrm {{h}}}}^{4}\,\bar {\lambda }\left [{n}\right]\, \eta_{ik'}} \, \hat{\mathbf{z}}_{ik'} \mathbf {x}_{ik'}\left [{n}\right ]\Bigg\} =& \, \frac{P_{\textrm {T}}\,\sigma _{\hat{\textrm {{h}}}}^{4}}{N_{\mathcal{B}}} \bar {\lambda }\left [{n}\right] \sum_{\substack{k'=1 \\ k' \neq k}}^{N_{\mathcal{U}}} \, \eta_{ik'} \mathbb{E}\Big\{|\hat{\mathbf{z}}_{ik}^{H}\hat{\mathbf{z}}_{ik'}|\Big\} \\
        =& \, P_{\textrm {T}}\,\sigma _{\hat{\textrm {{h}}}}^{4} \bar {\lambda }\left [{n}\right] \sum_{\substack{k'=1 \\ k' \neq k}}^{N_{\mathcal{U}}} \, \eta_{ik'}\label{eq:covmrc3}.
\end{align}
The fifth term in \eqref{eq:unfcsi} represents the beamforming gain uncertainty. Its variance is given by 
\begin{align}
        \frac{P_{\textrm {T}}\,\sigma_{\hat{\textrm {{h}}}}^{4}\,\bar {\lambda }\left [{n}\right]\, \eta_{ik}}{N_{\mathcal{B}}} &\textrm{Var} \Big\{\left(||\hat{\mathbf{z}}_{ik}||^{2} - \mathbb{E}\Big\{||\hat{\mathbf{z}}_{ik}||^{2}\right)x_{ik}\Big\} \nonumber \\ 
        =& \, \frac{P_{\textrm {T}}\,\sigma _{\hat{\textrm {{h}}}}^{4}\,\bar {\lambda }\left [{n}\right]\, \eta_{ik}}{N_{\mathcal{B}}} \left(\mathbb{E}\Big\{||\hat{\mathbf{z}}_{ik}||^{4}\Big\} - \left(\mathbb{E}\Big\{||\hat{\mathbf{z}}_{ik}||^{2}\Big\}\right)^{2}\right) = P_{\textrm {T}}\,\sigma _{\hat{\textrm {{h}}}}^{4}\,\bar {\lambda }\left [{n}\right]\, \eta_{ik}\label{eq:covmrc4}.
\end{align}
The uplink SINR of UE$_{ik}$'s $n$-th transmission on the $i$-th subcarrier is obtained by dividing the variance in \eqref{eq:covmrc1} by the sum of the variances in \eqref{eq:covmrc2}, \eqref{eq:covmrc3} and \eqref{eq:covmrc4}. Hence, we obtain
\begin{equation}
    \textrm{SINR}_{ik}^{\textrm{mrc, ul}}\left[n\right] = \frac{N_{\mathcal{B}}\,\eta_{ik}\, \sigma _{\hat{\textrm {h}}}^{2}\,\bar {\lambda }\left [{n}\right ]}{\left(\frac{N_{\mathcal {U}}\sigma _{\textrm {u}}^{2}}{N_{\mathcal{C}}} + \frac{\sigma _{\textrm {n}}^{2}}{P_{\textrm {T}}}\right)\left(1+\frac{N_{\mathcal{V}}  \bar{\eta}_{i}}{N_{\mathcal{P}}} \bar {\lambda }\left [{n}\right ]\right) + \bar{\eta}_{i} \sigma _{\textrm {h}}^{2}\left ({1-\bar {\lambda }\left [{n}\right ]}\right ) + \sigma _{\hat{\textrm {{h}}}}^{2}\, \bar{\eta}_{i}\, \bar {\lambda }\left [{n}\right]},
\end{equation}
where $\bar{\eta}_{i} = \sum_{k=1}^{N_{\mathcal{U}}} \eta_{ik}$;  ${N_{\mathcal{V}}  = \lceil{N_{\mathcal{C}}/N_{\mathcal{H}}}\rceil}$, $\bar {\lambda }\left [{n}\right ]=\frac {1}{V_{\textrm {max}}}\int _{0}^{V_{\textrm {max}}}J_{0}^{2}\left({\frac {2\pi vf_{\textrm{c}}nT_{\textrm {s}}}{\textrm {c}}}\right)dv$ denotes the expectation of the diagonal entries in $\left ({\boldsymbol {\Lambda }_{i}\left [{n}\right ]}\right )^{2}$ and $\sigma_{\hat{\textrm{h}}}^{2} = \sigma_{\textrm{h}}^{2} + \frac{N_{\mathcal{V}}  N_{\mathcal{U}}}{N_{\mathcal{P}} N_{\mathcal{C}}} \sigma_{\textrm{u}}^{2} + \frac{N_{\mathcal{V}} }{N_{\mathcal{P}}P_{\textrm{T}}} \sigma_{\textrm{n}}^{2}$ denotes the variance of the entries in $\hat{\mathbf{H}}_{i}$. The achievable uplink rate is then computed as $\textrm{C}_{ik}^{\textrm{mrc, ul}} \left[n\right] \geq \log_{2}\left(1 + \textrm{SINR}_{ik}^{\textrm{mrc, ul}}\left[n\right]\right)$\cite{marzetta_book_2016}. This leads us to \eqref{eq:ratemrc}.

\ifCLASSOPTIONcaptionsoff
  \newpage
\fi

\bibliographystyle{IEEEtran}
\bibliography{IEEEabrv.bib, bibliography.bib}

\begin{thebibliography}{10}
\providecommand{\url}[1]{#1}
\csname url@samestyle\endcsname
\providecommand{\newblock}{\relax}
\providecommand{\bibinfo}[2]{#2}
\providecommand{\BIBentrySTDinterwordspacing}{\spaceskip=0pt\relax}
\providecommand{\BIBentryALTinterwordstretchfactor}{4}
\providecommand{\BIBentryALTinterwordspacing}{\spaceskip=\fontdimen2\font plus
\BIBentryALTinterwordstretchfactor\fontdimen3\font minus
  \fontdimen4\font\relax}
\providecommand{\BIBforeignlanguage}[2]{{%
\expandafter\ifx\csname l@#1\endcsname\relax
\typeout{** WARNING: IEEEtran.bst: No hyphenation pattern has been}%
\typeout{** loaded for the language `#1'. Using the pattern for}%
\typeout{** the default language instead.}%
\else
\language=\csname l@#1\endcsname
\fi
#2}}
\providecommand{\BIBdecl}{\relax}
\BIBdecl

\bibitem{bjrnson_magazine_2016}
E.~{Björnson}, E.~G. {Larsson}, and T.~L. {Marzetta}, ``Massive {MIMO}: ten
  myths and one critical question,'' \emph{{IEEE} Commun. Mag.}, vol.~54,
  no.~2, pp. 114--123, 2016.

\bibitem{marzetta_twc_2010}
T.~L. {Marzetta}, ``Noncooperative cellular wireless with unlimited numbers of
  base station antennas,'' \emph{{IEEE} Trans. Wireless Commun.}, vol.~9,
  no.~11, pp. 3590--3600, 2010.

\bibitem{hoydis_jsac_2013}
J.~{Hoydis}, S.~{ten Brink}, and M.~{Debbah}, ``Massive {MIMO} in the {UL/DL}
  of cellular networks: How many antennas do we need?'' \emph{{IEEE} J. Select.
  Areas Commun.}, vol.~31, no.~2, pp. 160--171, 2013.

\bibitem{marzetta_book_2016}
T.~{Marzetta}, E.~Larsson, H.~Yang, and H.~Ngo, \emph{Fundamentals of Massive
  MIMO}.\hskip 1em plus 0.5em minus 0.4em\relax Cambridge University Press,
  2016.

\bibitem{xiao_tvt_2015}
X.~{Xiao}, X.~{Tao}, and J.~{Lu}, ``Energy-efficient resource allocation in
  {LTE}-based {MIMO-OFDMA} systems with user rate constraints,'' \emph{{IEEE}
  Trans. Veh. Technol.}, vol.~64, no.~1, pp. 185--197, 2015.

\bibitem{ng_tcom_2012}
D.~W.~K. {Ng}, E.~S. {Lo}, and R.~{Schober}, ``Dynamic resource allocation in
  {MIMO-OFDMA} systems with full-duplex and hybrid relaying,'' \emph{{IEEE}
  Trans. Commun.}, vol.~60, no.~5, pp. 1291--1304, 2012.

\bibitem{xu_tcom_2013}
Z.~{Xu}, C.~{Yang}, G.~Y. {Li}, S.~{Zhang}, Y.~{Chen}, and S.~{Xu},
  ``Energy-efficient configuration of spatial and frequency resources in
  {MIMO-OFDMA} systems,'' \emph{{IEEE} Trans. Commun.}, vol.~61, no.~2, pp.
  564--575, 2013.

\bibitem{akp_tvt_2017}
A.~K. {Papazafeiropoulos}, ``Impact of general channel aging conditions on the
  downlink performance of massive {MIMO},'' \emph{{IEEE} Trans. Veh. Technol.},
  vol.~66, no.~2, pp. 1428--1442, 2017.

\bibitem{akp_twc_2015}
A.~K. {Papazafeiropoulos} and T.~{Ratnarajah}, ``Deterministic equivalent
  performance analysis of time-varying massive {MIMO} systems,'' \emph{{IEEE}
  Trans. Commun.}, vol.~14, no.~10, pp. 5795--5809, 2015.

\bibitem{chopra_twc_2018}
R.~{Chopra}, C.~R. {Murthy}, H.~A. {Suraweera}, and E.~G. {Larsson},
  ``Performance analysis of {FDD} massive {MIMO} systems under channel aging,''
  \emph{{IEEE} Trans. Wireless Commun.}, vol.~17, no.~2, pp. 1094--1108, 2018.

\bibitem{truong_jcn_2013}
K.~T. {Truong} and R.~W. {Heath}, ``Effects of channel aging in massive {MIMO}
  systems,'' \emph{J. Commun. Netw.}, vol.~15, no.~4, pp. 338--351, 2013.

\bibitem{chopra_letters_2016}
R.~{Chopra}, C.~R. {Murthy}, and H.~A. {Suraweera}, ``On the throughput of
  large {MIMO} beamforming systems with channel aging,'' \emph{{IEEE} Signal
  Processing Lett.}, vol.~23, no.~11, pp. 1523--1527, 2016.

\bibitem{kong_tcom_2015}
C.~Kong, C.~Zhong, A.~K. Papazafeiropoulos, M.~Matthaiou, and Z.~Zhang,
  ``Sum-rate and power scaling of massive {MIMO} systems with channel aging,''
  \emph{{IEEE} Trans. Commun.}, vol.~63, no.~12, pp. 4879--4893, 2015.

\bibitem{kim_tcom_2021}
H.~Kim, S.~Kim, H.~Lee, C.~Jang, Y.~Choi, and J.~Choi, ``Massive {MIMO} channel
  prediction: Kalman filtering vs. machine learning,'' \emph{{IEEE} Trans.
  Commun.}, vol.~69, no.~1, pp. 518--528, 2021.

\bibitem{Chopra_TSP_2021}
R.~Chopra and C.~R. Murthy, ``Data aided {MSE}-optimal time varying channel
  tracking in massive {MIMO} systems,'' \emph{{IEEE} Trans. Signal Processing},
  {Accepted, Jun. 2021}.

\bibitem{Chopra_ComLet_2021}
R.~Chopra, C.~R. Murthy, and A.~K. Papazafeiropoulos, ``Uplink performance
  analysis of cell-free m{MIMO} systems under channel aging,'' \emph{{IEEE}
  Commun. Lett.}, 2021.

\bibitem{zhang_tcom_2017}
Z.~{Zhang}, C.~{Jiao}, and C.~{Zhong}, ``Impact of mobility on the uplink sum
  rate of {MIMO-OFDMA} cellular systems,'' \emph{{IEEE} Trans. Commun.},
  vol.~65, no.~10, pp. 4218--4231, 2017.

\bibitem{dai_icassp_2006}
H.~{Dai}, ``Distributed versus co-located {MIMO} systems with correlated fading
  and shadowing,'' in \emph{2006 IEEE International Conference on Acoustics
  Speech and Signal Processing Proceedings}, vol.~4, 2006, pp. IV--IV.

\bibitem{jakes_wiley_1974}
W.~C. Jakes and D.~C. Cox, Eds., \emph{Microwave Mobile Communications},
  2nd~ed.\hskip 1em plus 0.5em minus 0.4em\relax IEEE Press, New York: IEEE
  Press, 1994.

\bibitem{vu_jsac_2007}
M.~{Vu} and A.~{Paulraj}, ``On the capacity of {MIMO} wireless channels with
  dynamic {CSIT},'' \emph{{IEEE} J. Select. Areas Commun.}, vol.~25, no.~7, pp.
  1269--1283, 2007.

\bibitem{zheng_twc_2021}
J.~Zheng, J.~Zhang, E.~Björnson, and B.~Ai, ``Impact of channel aging on
  cell-free massive {MIMO} over spatially correlated channels,'' \emph{{IEEE}
  Trans. Wireless Commun.}, vol.~20, no.~10, pp. 6451--6466, 2021.

\bibitem{bjrnson_book_2017}
E.~{Bj{\"o}rnson}, J.~{Hoydis}, L.~{Sanguinetti} \emph{et~al.}, ``Massive
  {MIMO} networks: Spectral, energy, and hardware efficiency,''
  \emph{Foundations and Trends{\textregistered} in Signal Processing}, vol.~11,
  no. 3-4, pp. 154--655, 2017.

\end{thebibliography}

\end{document}